\documentclass{aptpub}
\authornames{Y.~BAI AND H.~LAM} 
\shorttitle{Confidence Intervals for Naive Rare-Event Estimators} 



\usepackage{latexsym}
\usepackage{graphicx}
\usepackage{caption}
\DeclareCaptionLabelFormat{andtable}{#1~#2  \&  \tablename~\thetable}
\usepackage{subcaption}
\usepackage{mathptmx}
\usepackage{amsmath}
\usepackage{amsfonts}
\usepackage{amssymb}
\usepackage{amsbsy}
\usepackage{algorithm}
\usepackage{algorithmic}
\usepackage[pdftex,colorlinks=true,urlcolor=blue,citecolor=black,anchorcolor=black,linkcolor=black]{hyperref}
\usepackage{appendix}
\usepackage[margin=1in]{geometry}
\usepackage{makecell}
\usepackage{enumerate}
\usepackage{breakcites}
\usepackage{xcolor}

\begin{document}

\title{Uncertainty Quantification and Confidence Intervals for Naive \\Rare-Event Estimators} 

\authorone[Columbia University]{Yuanlu Bai} 
\addressone{500 West 120th Street, New York, NY}
\emailone{yb2436@columbia.edu} 
\authortwo[Columbia University]{Henry Lam} 
\addresstwo{500 West 120th Street, New York, NY} 
\emailtwo{henry.lam@columbia.edu} 

\begin{abstract}
We consider the estimation of rare-event probabilities using sample proportions output by naive Monte Carlo or collected data. Unlike using variance reduction techniques, this naive estimator does not have a priori relative efficiency guarantee. On the other hand, due to the recent surge of sophisticated rare-event problems arising in safety evaluations of intelligent systems, efficiency-guaranteed variance reduction may face implementation challenges which, coupled with the availability of computation or data collection power, motivate the use of such a naive estimator. In this paper we study the uncertainty quantification, namely the construction, coverage validity and tightness of confidence intervals, for rare-event probabilities using only sample proportions. In addition to the known normality, Wilson's and exact intervals, we investigate and compare them with two new intervals derived from Chernoff's inequality and the Berry-Esseen theorem. Moreover, we generalize our results to the natural situation where sampling stops by reaching a target number of rare-event hits. Our findings show that the normality and Wilson's intervals are not always valid, but they are close to the newly developed valid intervals in terms of half-width. In contrast, the exact interval is conservative, but safely guarantees the attainment of the nominal confidence level. Our new intervals, while being more conservative than the exact interval, provide useful insights in understanding the tightness of the considered intervals.
\end{abstract}

\keywords{rare-event estimation; confidence interval; relative error; sample proportion}

\ams{62A01}{00A72}

\section{Introduction}
\label{sec:intro}
We consider the problem of estimating a minuscule probability, denoted $p=\mathbb{P}(A)$, for some rare event $A$, using data or Monte Carlo samples. This problem, known as rare-event estimation, is of wide interest to communities such as system reliability \cite{nicola1993fast,heidelberger1995fast,tuffin2004numerical,nicola2001techniques}, queueing systems \cite{sadowsky1991large,kroese1999efficient,blanchet2009rare,blanchet2014rare,szechtman2002rare,ridder2009importance,dupuis2009importance}, finance and insurance \cite{glasserman2004monte,glasserman2005importance,glasserman2008fast,mcneil2015quantitative,asmussen2010ruin,asmussen1985conjugate,collamore2002importance}, where it is crucial to estimate the likelihood of events which, though unlikely, can cause catastrophic impacts. 

There are multiple prominent lines of work addressing this estimation problem, depending on how information is collected. In settings where real-world data are collected, methods based on extreme value theory \cite{embrechts2013modelling,davison1990models,mcneil2015quantitative,smith1984threshold} are often used to extrapolate distributional tails to assist such estimation. These methods are theoretically justified and widely applicable, but their performance could be affected by intricate hyperparameter choices that affect their accuracy and challenge the reliability in uncertainty quantification \cite{embrechts2013modelling}. 
In settings where $A$ is an event described by a simulable model, Monte Carlo methods can be used, and to speed up computation one often harnesses variance reduction tools such as importance sampling \cite{siegmund1976importance,sadowsky1990large,juneja2006rare}, conditional Monte Carlo \cite{asmussen2007stochastic,rubinstein2016simulation} and multi-level splitting \cite{glasserman1999multilevel,villen1994restart,au2001estimation}. While variance reduction is greatly beneficial in reducing the number of Monte Carlo samples needed to estimate rare events \cite{bucklew2013introduction,asmussen2007stochastic,rubinstein2016simulation}, it is also widely known that they rely heavily on model assumptions \cite{juneja2006rare,blanchet2012state}. That is, to guarantee the successful performances of these techniques, one typically needs to analyze the underlying model dynamics carefully to design the Monte Carlo scheme. However, recent applications, such as autonomous vehicle safety evaluation \cite{zhao2016accelerated,zhao2017accelerated,huang2017accelerated,o2018scalable,arief2021deep} and robustness evaluation of machine learning predictors \cite{webb2018statistical,weng2018evaluating,bai2022rare}, lead to rare-event estimation problems with extremely sophisticated structures that hinder the design of efficiency-guaranteed variance reduction schemes. On the other hand, with the remarkable recent surge of computational infrastructure, in some situations one could afford to run gigantic amount of simulation trials.

Motivated by the limitations of the above techniques and the potential to generate numerous samples, in this paper we focus on a more basic setting than some of the above literature, but in a sense fundamental. More precisely, we focus on the situation where all we have to estimate $p$ is a set of i.i.d. Bernoulli observations $I(A)$. A natural point estimate of $p$ is the sample proportion $\hat p$, i.e., given a set of Bernoulli data $I_1,\ldots,I_n$ of size $n$, we output $\hat p=(1/n)\sum_{i=1}^nI_i$. We are interested in understanding the statistical error in using $\hat p$, in the situation where \emph{$p$ could be very small, importantly with no lower bound on how small it could be}. Unlike the estimates given by efficiency-guaranteed variance reduction techniques, as we will explain momentarily, it is not entirely straightforward whether using simple sample proportion can give meaningful guarantees to estimating rare-event probabilities, in relation to the sample size $n$ and the (unknown) magnitude of $p$. Motivated by this, our main goal of this paper is to study the construction, coverage validity and tightness of confidence intervals (CIs) for rare-event probabilities using only the simple sample proportion estimator. The main messages from our findings are as follows: The normality and Wilson's intervals are not always valid, in the sense that their actual coverage probabilities can be less than the nominal confidence level, but they are shown to be close to our two newly developed valid intervals in terms of half-width. On the other hand, the exact interval is conservative, as its coverage probability is strictly larger than the nominal confidence level and hence it is not as tight as the aforementioned two intervals, but it safely guarantees the attainment of the nominal confidence level. Our new intervals are even more conservative than the exact interval and hence not recommended in practice, but they provide useful insights in understanding the tightness of the normality and Wilson's intervals.


This paper is organized as follows. Section \ref{sec:problem} describes the problem setting and the motivating challenges. Section \ref{sec:overview CI} overviews the existing and new CIs, and Section \ref{sec:problem_conclusion} summarizes our main results. Then, in Sections \ref{sec:standard} and \ref{sec:stopping}, we present the details of the derivation and analyses of these intervals. After that, Section \ref{sec:numerical} reports some numerical results to visualize our comparisons. Section \ref{sec:conclusion} concludes this paper with our findings and recommendations. All missing proofs can be found in the appendix. 



\section{Problem Setting and Motivation}\label{sec:problem}


Suppose we would like to estimate a target probability $p$ by using information from the Bernoulli data, or equivalently $\hat p$. In particular, we would like to construct a CI for $p$ that has justifiable statistical guarantees. In answering this, we would also quantify the error between the point estimate $\hat p$ and $p$. 

First of all, we clarify what a good CI is supposed to be. To this end, we mainly consider the \emph{validity} of the coverage and \emph{tightness}. Throughout this paper, we say that $[\hat p_l(\alpha),\hat p_u(\alpha)]$ is a valid $(1-\alpha)$-level CI if $\mathbb{P}(\hat p_l(\alpha)\leq p\leq \hat p_u(\alpha))\geq 1-\alpha$. 
This notion of validity can be defined similarly for one-sided confidence bounds. On the other hand, a good CI should not be too wide; for example, in the extreme case, the trivial CI $[0,1]$ is valid, but it does not provide any useful information. In this paper, we quantify tightness by the ``half-width'', i.e., $\hat p_u(\alpha)-\hat p$ or $\hat p-\hat p_l(\alpha)$ (some intervals we consider are symmetric so there is no difference between the ``upper" and ``lower" half-widths, but some intervals are not, in which case the context would make the meaning of half-width clear). Importantly, considering that $p$ is tiny in the rare-event settings, the CI is meaningful only if the half-width is small relative to $p$ and $\hat p$.

To understand the challenges, we first examine the use of a standard ``textbook" CI, and we focus on the upper confidence bound for now since the lower confidence bound can be argued analogously. More specifically, we use the following as the $(1-\alpha)$-level upper confidence bound
\begin{equation}
\hat p^{CLT}=\hat p+z_{1-\alpha}\sqrt{\frac{\hat p(1-\hat p)}{n}}\label{naive}
\end{equation}
where $z_{1-\alpha}$ is the $(1-\alpha)$-quantile of a standard normal variable. The typical way to justify \eqref{naive} is a normal approximation using the central limit theorem (CLT), which entails that
\begin{equation}
\mathbb{P}(p\leq\hat p^{CLT})\approx\bar\Phi(-z_{1-\alpha})=1-\alpha\label{approx naive}
\end{equation}
where we denote $\bar\Phi$ (and $\Phi$) as the tail (and cumulative) distribution function of standard normal. 

To delve a little further, note that the approximation error in \eqref{approx naive} is controlled by the Berry-Esseen (B-E) Theorem. To simplify the discussion, suppose we are in a more idealized (but unrealistic) case that we know the precise value of the variance of the Bernoulli trial, i.e., $\sigma^2=p(1-p)$, so that we use $\hat p+z_{1-\alpha}\sigma/\sqrt n$. Then the B-E theorem stipulates that
\begin{equation}
|\mathbb{P}(p\leq\hat p^{CLT})-\Phi(z_{1-\alpha})|\leq\frac{C\rho}{\sigma^3\sqrt n}\label{Berry Essen}
\end{equation}
where $\rho=E|I_i-p|^3=p(1-p)(1-2p+2p^2)$, and $C$ is a universal constant ($\approx0.4748$). Thus, the error in \eqref{Berry Essen} is bounded by
\begin{equation}
\frac{Cp(1-p)(1-2p+2p^2)}{p^{3/2}(1-p)^{3/2}\sqrt n}\leq \frac{C}{\sqrt{np(1-p)}}.\label{bound transform}
\end{equation}
The issue is that when $p$ is tiny, $np$ can also be tiny unless $n$ is sufficiently big, but a priori we would not know what $n$ is ``sufficient". If we have used the confidence bound given by \eqref{naive} where the variance $\sigma^2$ is unknown and estimated by $\hat p(1-\hat p)$, a similar B-E bound would ultimately conclude the same issue as revealed by \eqref{bound transform} \cite{shao2013}. A straightforward idea is to use the number of successes to infer whether $n$ is sufficiently large. Suppose we have, say, 30 ``success" outcomes among $n$ trials, then we may think that $np\approx30$, so that from the bound \eqref{bound transform} the error of $\hat p^{CLT}$ appears controlled. As another more extreme case, suppose we only have only 1 success, then we may be led to believe $np\approx1$, so that $\hat p^{CLT}$ is well-defined but its coverage is likely way off from $1-\alpha$. However, we note that the guess that $np\approx30$ or $np\approx1$ is itself based on some central limit or concentration argument, which apparently leads to a circular reasoning. This challenge motivates us to investigate more on the validity and tightness of different CIs in order to make a suitable choice.

It is well known that a quick and implementable approach to construct a CI that is always valid regardless of $n$, $p$ or $\hat p$ is to utilize the fact that $n\hat p$ follows a binomial distribution and extract a finite-sample confidence region using this exact distribution. This is often called the Clopper-Pearson CI or the exact method \cite{clopper1934use}. Though this is computationally easy, we are interested in simpler mathematical forms that allow us to analytically study the relative half-width as well. In this regard, Wilson's interval \cite{agresti1998approximate} has been studied and shown to give superior empirical performances, even in the case that $p$ is tiny, but we are not aware of any rigorous proof on its validity. In this paper, we will propose two different ways of constructing CIs for $p$ that are simultaneously valid and analytically tractable, one using Chernoff's inequality, and the other one using the B-E bound. Compared to the exact CI, these two CIs have explicit forms that allow us to investigate their half-widths, thereby understand how far the CLT CI or Wilson's CI is from a valid CI.

Finally, in simulation analysis and some real data situations, it is natural to keep sampling until we observe enough successes (e.g., when the number of successes is 30) in the experiments. We will also adapt the existing or newly developed CIs to this setup and investigate their performance.

\section{Overview of Confidence Intervals}\label{sec:overview CI}

Here we briefly introduce the formulas of the CIs that we study in this paper. We will consider two settings respectively. The first one is called the ``standard'' setting, where the sample size $n$ is fixed. The other setting is when we fix the number of successes $\hat{s}=n\hat p$, which we call the ``targeted stopping'' setting. Under each setting, we will discuss three existing CIs: \emph{CLT CI}, \emph{Wilson's CI} and the \emph{Exact CI}. We will also introduce how to construct our new \emph{Chernoff's CI} and the \emph{B-E CI} via inverting Chernoff's inequality and the B-E theorem. 

\subsection{Confidence Intervals under Standard Setting}\label{sec:CI_standard}

Under the standard setting, to construct valid CIs, our beginning point is the following set:
\begin{equation}
    \{0< p< 1: F(\hat p)\geq\alpha/2,F_-(\hat p)\leq 1-\alpha/2\}
    \label{eqn:pvalue}
\end{equation}
where $F(x)=\mathbb{P}(\hat p\leq x)$ and $F_-(x)=\mathbb{P}(\hat p<x)$. Note that $F$ and $F_-$ depend on $p$. If $F$ were continuous, then we know that $\mathbb{P}(F(\hat p)\geq\alpha/2,F_-(\hat p)\leq 1-\alpha/2)=1-\alpha$ since in this case $F(\hat p)=F_-(\hat p)\stackrel{d}{=}Unif[0,1]$. Now we argue that $\mathbb{P}(F(\hat p)\geq\alpha/2,F_-(\hat p)\leq 1-\alpha/2)>1-\alpha$ in this discrete case. Indeed, for any $\alpha\in(0,1)$, there exist $0\leq k,l\leq n$ such that $F((k-1)/n)<\alpha/2\leq F(k/n)$ and $F_-(l/n)\leq 1-\alpha/2<F_-((l+1)/n)$. Then
\begin{align*}
    \mathbb{P}(F(\hat p)<\alpha/2\text{ or }F_-(\hat p)>1-\alpha/2)&\leq \mathbb{P}(F(\hat p)<\alpha/2)+\mathbb{P}(F_-(\hat p)>1-\alpha/2)\\
    &=\mathbb{P}(\hat p\leq (k-1)/n)+\mathbb{P}(\hat p\geq (l+1)/n)\\
    &=F((k-1)/n)+1-F_-((l+1)/n)<\alpha.
\end{align*}
Therefore, the set \eqref{eqn:pvalue} is a valid $(1-\alpha)$-level confidence region. From this derivation, we find that due to the discreteness, the probability that this confidence region covers the true value $p$ is strictly larger than the nominal confidence level $1-\alpha$, and hence this confidence region is inevitably conservative. 

The CLT CI and Wilson's CI can be obtained from \eqref{eqn:pvalue} by estimating $F(\hat p)$ and $F_-(\hat p)$ via normal approximation. As a result, these two CIs are no longer guaranteed to be valid. More specifically, using the fact that $\frac{\hat p-p}{\sqrt{\hat p(1-\hat p)/n}}\approx N(0,1)$, we substitute 
\begin{equation*}
F(\hat p)\approx \Phi\left(\frac{\hat p-p}{\sqrt{\hat p(1-\hat p)/n}}\right),F_-(\hat p)\approx \Phi\left(\frac{\hat p-p}{\sqrt{\hat p(1-\hat p)/n}}\right)
\end{equation*}
in \eqref{eqn:pvalue} to obtain the CLT CI:


\begin{definition}[CLT CI under Standard Setting]
Suppose that we estimate the probability $p=\mathbb{P}(A)$ for the event $A$, and $\hat p$ is the sample proportion of hitting $A$ in $n$ i.i.d. trials. Under this setting, the CLT CI is defined by:
\begin{align*}
    \hat{p}_u^{CLT}&=\hat p+z_{1-\alpha/2}\sqrt{\frac{\hat{p}(1-\hat{p})}{n}};\\
    \hat{p}_l^{CLT}&=\hat p-z_{1-\alpha/2}\sqrt{\frac{\hat{p}(1-\hat{p})}{n}}.
\end{align*}
\end{definition}

Similarly, by $\frac{\hat p-p}{\sqrt{p(1-p)/n}}\approx N(0,1)$ and substituting 
\begin{equation*}
F(\hat p)\approx \Phi\left(\frac{\hat p-p}{\sqrt{p(1- p)/n}}\right),F_-(\hat p)\approx \Phi\left(\frac{\hat p-p}{\sqrt{p(1-p)/n}}\right)
\end{equation*}
we get the Wilson's CI:
\begin{definition}[Wilson's CI under Standard Setting]
Suppose that we estimate the probability $p=\mathbb{P}(A)$ for the event $A$, and $\hat p$ is the sample proportion of hitting $A$ in $n$ i.i.d. trials. Under this setting, the Wilson's CI is defined by:
\begin{align*}
    \hat{p}_u^{Wilson}&=\frac{1+\frac{2n\hat{p}}{z_{1-\alpha/2}^2}+\sqrt{1+\frac{4n\hat{p}(1-\hat{p})}{z_{1-\alpha/2}^2}}}{2\left(1+\frac{n}{z_{1-\alpha/2}^2}\right)};\\
    \hat{p}_l^{Wilson}&=\frac{1+\frac{2n\hat{p}}{z_{1-\alpha/2}^2}-\sqrt{1+\frac{4n\hat{p}(1-\hat{p})}{z_{1-\alpha/2}^2}}}{2\left(1+\frac{n}{z_{1-\alpha/2}^2}\right)}.
\end{align*}
\end{definition}

Instead of using normal approximation, the exact CI directly solves the valid confidence region \eqref{eqn:pvalue}. In fact, we know that $\hat{s}=n\hat p\sim Binomial(n, p)$, so the functions $F(\cdot)$ and $F_-(\cdot)$ has exact expressions. More specifically, we have the following definition:

\begin{definition}[Exact CI under Standard Setting]
Suppose that we estimate the probability $p=\mathbb{P}(A)$ for the event $A$, and $\hat p$ is the sample proportion of hitting $A$ in $n$ i.i.d. trials. Under this setting, the Exact CI is defined by $[\hat{p}^{Exact}_l,\hat{p}^{Exact}_u]$ where $\hat{p}^{Exact}_u$ and $\hat{p}^{Exact}_l$ are respectively the solution to 
\begin{align*}
    \sum_{k=0}^{\hat{s}}\binom{n}{k}p^k(1-p)^{n-k}&=\alpha/2,\\
    \sum_{k=\hat{s}}^n\binom{n}{k}p^k(1-p)^{n-k}&=\alpha/2,
\end{align*}
except that $\hat{p}^{Exact}_u=1$ if $\hat{s}=n$ and $\hat{p}^{Exact}_l=0$ if $\hat{s}=0$. 
\end{definition}
When $0<\hat{s}<n$, the bounds could be expressed explicitly via quantiles of the $F$ distribution or Beta distribution, and hence are easy to compute numerically \cite{agresti1998approximate}. However, it is hard to analyze the scale of this CI, which motivates us to further relax the confidence region \eqref{eqn:pvalue} to get other valid CIs which are more conservative but easier to analyze.

In order to relax \eqref{eqn:pvalue}, we respectively consider using two methods: Chernoff's inequality and the B-E theorem. We will only present the formulas for them here and leave the details of their development to Section \ref{sec:standard_develop}.

\begin{definition}[Chernoff's CI under Standard Setting]
Suppose that we estimate the probability $p=\mathbb{P}(A)$ for the event $A$, and $\hat p$ is the sample proportion of hitting $A$ in $n$ i.i.d. trials. Under this setting, the Chernoff's CI is defined by:
\begin{align*}
    \hat{p}^{Chernoff}_u&=\hat{p}+\frac{\log(2/\alpha)}{n}+\sqrt{\frac{(\log(2/\alpha))^2}{n^2}+\frac{2\hat{p}\log(2/\alpha)}{n}},\\
    \hat{p}^{Chernoff}_l&=\hat{p}+\frac{\log(2/\alpha)}{2n}-\sqrt{\frac{(\log(2/\alpha))^2}{4n^2}+\frac{2\hat{p}\log(2/\alpha)}{n}}.
\end{align*}
\end{definition}
\begin{definition}[B-E CI under Standard Setting]
Suppose that we estimate the probability $p=\mathbb{P}(A)$ for the event $A$, and $\hat p$ is the sample proportion of hitting $A$ in $n$ i.i.d. trials. In addition, we assume that $p<\frac12$. Under this setting, the B-E CI is solved from 
\begin{align*}
    &\left\{0<p\leq \hat p\wedge\frac12:\Phi\left(\frac{p-\hat{p}}{\sqrt{p(1-p)/n}}\right)+\frac{C}{\sqrt{np(1-p)}}\geq\frac{\alpha}{2}\right\}\\
    \cup&\left\{\hat p\leq p<\frac12:\Phi\left(\frac{\hat{p}-p}{\sqrt{p(1-p)/n}}\right)+\frac{C}{\sqrt{np(1-p)}}\geq\frac{\alpha}{2}\right\}.
\end{align*}
\end{definition}

\subsection{Confidence Intervals under Targeted Stopping}\label{sec:CI_targeted}

Now we consider experiments where we keep sampling until we get $n_0$ successes. Under this setting, the sample size $N$ is a random variable. More specifically, $N=N_1+\cdots+N_{n_0}$ where $N_1,\cdots,N_{n_0}$ are i.i.d. $Geometric(p)$ random variables, or equivalently, $N-n_0$ follows a negative binomial distribution $NB(n_0,p)$. Note that $N\ge n_0$. We define $F_N(x)=\mathbb{P}(N\leq x)$ and $F_{N-}(x)=\mathbb{P}(N<x)$. Similar to Section \ref{sec:CI_standard}, we will argue that the following set is a valid $(1-\alpha)$-level confidence region for $p$:
\begin{equation}
    \{0< p< 1: F_N(N)\geq\alpha/2,F_{N-}(N)\leq 1-\alpha/2\}.
    \label{eqn:pvalueN}
\end{equation}
Indeed, for any $\alpha\in(0,1)$, there exist $1\leq k,l<\infty$ such that $F_N(k-1)<\alpha/2\leq F_N(k)$ and $F_{N-}(l)\leq 1-\alpha/2<F_{N-}(l+1)$. Then
\begin{align*}
    \mathbb{P}(F_N(N)<\alpha/2\text{ or }F_{N-}(N)>1-\alpha/2)&\leq \mathbb{P}(F_N(N)<\alpha/2)+\mathbb{P}(F_{N-}(N)>1-\alpha/2)\\
    &=\mathbb{P}(N\leq k-1)+\mathbb{P}(N\geq l+1)\\
    &=F_N(k-1)+1-F_{N-}(l+1)<\alpha.
\end{align*}
By definition, the set \eqref{eqn:pvalueN} is a valid $(1-\alpha)$-level confidence region.

We could still use the CLT CI and Wilson's CI with $\hat p=n_0/N$. More specifically, the CLT CI is 
\begin{definition}[CLT CI under Targeted Stopping]
Suppose that we estimate the probability $p=\mathbb{P}(A)$ for the event $A$. We keep sampling until we get $n_0$ successes and the sample size is denoted by $N$. Under this setting, the CLT CI is defined by:
\begin{align*}
    \hat{p}_{u,n_0}^{CLT}&=\frac{n_0}{N}+z_{1-\alpha/2}\sqrt{\frac{n_0(N-n_0)}{N^3}};\\
    \hat{p}_{l,n_0}^{CLT}&=\frac{n_0}{N}-z_{1-\alpha/2}\sqrt{\frac{n_0(N-n_0)}{N^3}}.
\end{align*}
\end{definition}
The Wilson's CI is 
\begin{definition}[Wilson's CI under Targeted Stopping]
Suppose that we estimate the probability $p=\mathbb{P}(A)$ for the event $A$. We keep sampling until we get $n_0$ successes and the sample size is denoted by $N$. Under this setting, the Wilson's CI is defined by:
\begin{align*}
    \hat{p}_{u,n_0}^{Wilson}&=\frac{1+\frac{2n_0}{z_{1-\alpha/2}^2}+\sqrt{1+\frac{4n_0(N-n_0)}{z_{1-\alpha/2}^2N}}}{2\left(1+\frac{N}{z_{1-\alpha/2}^2}\right)};\\
    \hat{p}_{l,n_0}^{Wilson}&=\frac{1+\frac{2n_0}{z_{1-\alpha/2}^2}-\sqrt{1+\frac{4n_0(N-n_0)}{z_{1-\alpha/2}^2N}}}{2\left(1+\frac{N}{z_{1-\alpha/2}^2}\right)}.
\end{align*}
\end{definition}

Similar to the standard setting, we can directly solve \eqref{eqn:pvalueN} using the exact distribution of $N$. 
\begin{definition}[Exact CI under Targeted Stopping]
Suppose that we estimate the probability $p=\mathbb{P}(A)$ for the event $A$. We keep sampling until we get $n_0$ successes and the sample size is denoted by $N$. Under this setting, the Exact CI is defined by $[\hat{p}_{l,n_0}^{Exact},\hat{p}_{u,n_0}^{Exact}]$ where $\hat{p}_{u,n_0}^{Exact}$ and $\hat{p}_{l,n_0}^{Exact}$ are respectively the solution to
\begin{align*}
    \sum_{k=0}^{N-n_0-1}\binom{k+n_0-1}{n_0-1}(1-p)^kp^{n_0}&=1-\alpha/2,\\
    \sum_{k=0}^{N-n_0}\binom{k+n_0-1}{n_0-1}(1-p)^kp^{n_0}&=\alpha/2,
\end{align*}
except that $\hat{p}_{u,n_0}^{Exact}=1$ if $N=n_0$. 
\end{definition}
While the interval is easy to compute numerically, it is not easy to analyze. Similar to the standard setting, we will relax the confidence region \eqref{eqn:pvalueN} to construct valid CIs respectively via inverting the Chernoff's inequality and the B-E theorem. We leave the details of developing these two new CIs to Section \ref{sec:stopping_develop} and only present the formulas here. 

\begin{definition}[Chernoff's CI under Targeted Stopping]
Suppose that we estimate the probability $p=\mathbb{P}(A)$ for the event $A$. We keep sampling until we get $n_0$ successes and the sample size is denoted by $N$. Under this setting, the Chernoff's CI is solved from
\begin{equation*}
    \left\{0<p<1:p^{n_0}(1-p)^{N-n_0}\geq \frac{\alpha}{2}\left(\frac{n_0}{N}\right)^{n_0}\left(1-\frac{n_0}{N}\right)^{N-n_0}\right\}.
\end{equation*}
\end{definition}
\begin{definition}[B-E CI under Targeted Stopping]
Suppose that we estimate the probability $p=\mathbb{P}(A)$ for the event $A$. We keep sampling until we get $n_0$ successes and the sample size is denoted by $N$. In addition, we assume that $p<\frac12$. Under this setting, the B-E CI is defined by:
\begin{align*}
    &\left\{0< p\leq\frac{n_0}{N}\wedge\frac12:\Phi\left(\frac{Np-n_0}{\sqrt{n_0(1-p)}}\right)+\frac{C'}{\sqrt{n_0(1-p)^3}}\geq\frac{\alpha}{2}\right\}\\
    \cup&\left\{\frac{n_0}{N}\leq p<\frac{1}{2}:\Phi\left(\frac{n_0-Np}{\sqrt{n_0(1-p)}}\right)+\frac{C'}{\sqrt{n_0(1-p)^3}}\geq\frac{\alpha}{2}\right\}
\end{align*}
where $C'=16C$ is a universal constant.
\end{definition}

\section{Summary of Main Results}\label{sec:problem_conclusion}

As explained in Section \ref{sec:problem}, we mainly consider the validity in terms of coverage probability and tightness in terms of half-width when we compare different CIs. In terms of validity, the existing CLT CI and Wilson's CI do not possess guarantees, while the Exact CI, and our new Chernoff's CI and B-E CI are valid by construction. The half-widths of the CIs, which will be analyzed in detail in Sections \ref{sec:standard_width} and \ref{sec:stopping_width}, can be summarized in Figures \ref{fig:comparison_standard} and \ref{fig:comparison_targeted}. In particular, Figures \ref{fig:comparison_standard} and \ref{fig:comparison_targeted} illustrate the comparisons of these CIs in terms of upper and lower bound (the exact CI is not included since it is hard to analyze its magnitude). For instance, from \eqref{naive}, we clearly see that the half-width of the CLT CI scales in the same order as $\sqrt{\hat p/n}=\hat p/\sqrt{\hat{s}}$ where $\hat{s}=n\hat p$ is the number of positive outcomes. 
By expressing $\sqrt{\hat p/n}$ as $\hat p/\sqrt{\hat{s}}$ here, 
it is easier to see how the half-width scales relative to $\hat p$. That is, the relative half-width is of order $1/\sqrt{\hat{s}}$. Note that in these figures, under the standard setting, for $f(n,\hat p),g(n,\hat p)\geq 0$, we write $f=O(g)$ if there exist $N_0,p_0,M>0$ which do not depend on $p$ or $\hat p$ such that for any $n\hat p>N_0$ and $\hat p<p_0$, $f\leq Mg$; we write $f=\Theta(g)$ if there exist $N_0,p_0,M_1,M_2>0$ which do not depend on $p$ or $\hat p$ such that for any $n\hat p>N_0$ and $\hat p<p_0$, $M_1g\leq f\leq M_2g$. Under the targeted stopping setting, $O(\cdot)$ and $\Theta(\cdot)$ are defined similarly by replacing $n\hat p$ with $n_0$. The notations $O(\cdot)$ and $\Theta(\cdot)$ will be used throughout the rest of this paper.

\begin{figure}[!ht]
\centering
    \begin{subfigure}{0.4\textwidth}
        \centering
        \includegraphics[width=\textwidth]{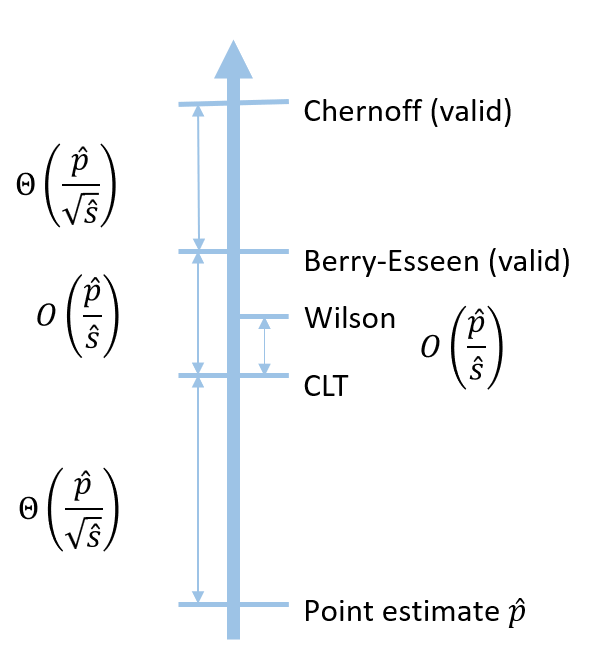}
        \caption{Standard - Upper}
    \end{subfigure}
    \begin{subfigure}{0.4\textwidth}
        \centering
        \includegraphics[width=\textwidth]{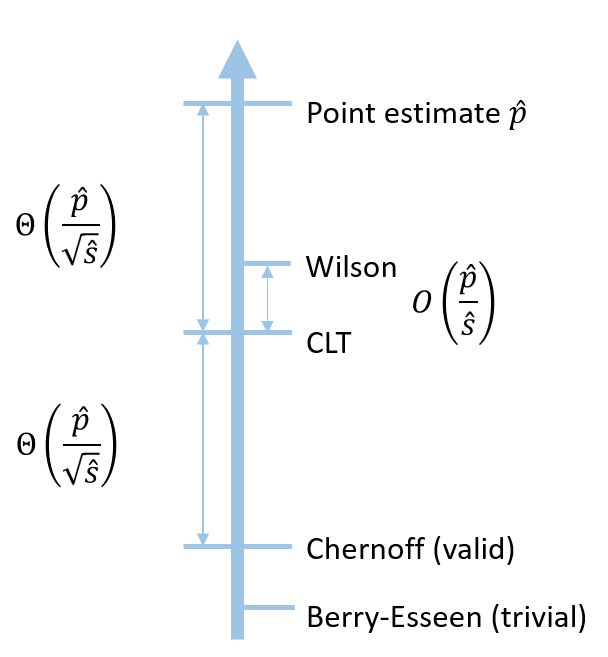}
        \caption{Standard - Lower}
    \end{subfigure}
    \caption{Comparisons of the positions of confidence upper and lower bounds under the standard setting. In this figure, ``valid'' means that the CI has valid coverage in the sense that the actual coverage probability always reaches the nominal confidence level.} 
    \label{fig:comparison_standard}
\end{figure}

\begin{figure}[!ht]
\centering
    \begin{subfigure}{0.4\textwidth}
        \centering
        \includegraphics[width=\textwidth]{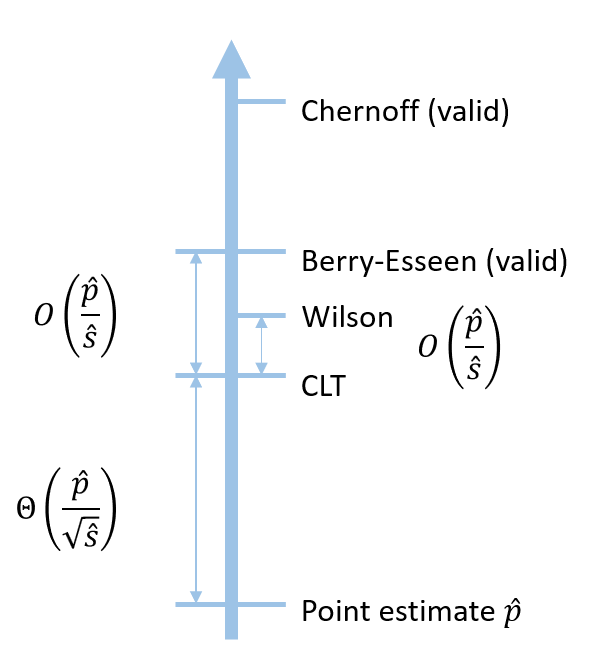}
        \caption{Targeted Stopping - Upper}
    \end{subfigure}
    \begin{subfigure}{0.4\textwidth}
        \centering
        \includegraphics[width=\textwidth]{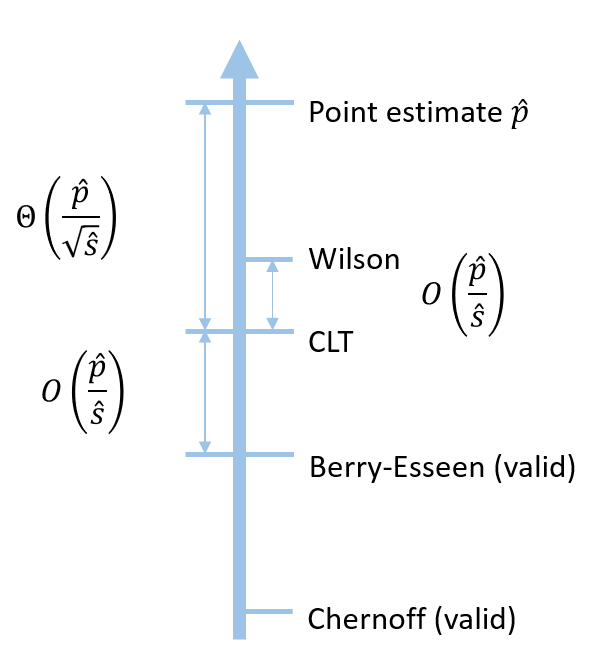}
        \caption{Targeted Stopping - Lower}
    \end{subfigure}
    \caption{Comparisons of the positions of confidence upper and lower bounds under the targeted stopping setting. In this figure, ``valid'' means that the CI has valid coverage in the sense that the actual coverage probability always reaches the nominal confidence level.} 
    \label{fig:comparison_targeted}
\end{figure}

More concretely, Table \ref{tab:summary} summarizes the formula, scale, pros and cons of each CI under both settings. The key findings are summarized as follows:
\begin{itemize}
    \item The CLT CI is the ``textbook" normality interval and thus very intuitive, but its coverage probability can be far below the nominal level. However, in terms of the half-width, except for the lower bound in the standard setting, the difference between the CLT bound and the valid B-E bound is of order $\hat p/\hat{s}$, so the relative difference with respect to $\hat p$ is of order $1/\hat{s}$, which is of higher order in $\hat{s}$ than its relative half-width. This can be viewed as a relatively small price of validity paid to make the CLT bound correct. For the lower bound in the standard setting, the B-E bound is trivial, so we cannot come to a similar conclusion. However, in this case the difference between the CLT bound and the valid Chernoff's bound is of order $\hat p/\sqrt{\hat{s}}$, the same order as the half-width, which shows that the CLT bound roughly has a correct magnitude.
    \item In practice, Wilson's CI has a satisfactory performance, in the sense that it is relatively tight while the coverage probability is usually close to the nominal confidence level. The difference between Wilson's bound and the CLT bound is of order $\hat p/\hat{s}$, which is of higher order in $\hat{s}$ than the half-width. As a result, the conclusions for the CLT CI regarding the difference from the valid B-E and Chernoff's CIs still hold for the Wilson's CI.
    \item The Exact CI is, as aforementioned, inevitably conservative, in the sense that its coverage probability is strictly higher than the nominal level. However, it is the tightest among the valid CIs, so it is recommended when one wants the nominal confidence level to be guaranteed. Chernoff's and the B-E CIs are valid but extremely conservative. They are not recommended to use in practice, but their analytical forms help us gain useful insights on the CLT and Wilson's CIs. That is, now we learn that the CLT and Wilson's CIs, althought not always valid, are relatively close to these two valid CIs as mentioned in the first bulletpoint.
\end{itemize}

\begin{table}[!ht]
    \begin{subtable}{\textwidth}
        \centering
    \resizebox{\columnwidth}{!}{\begin{tabular}{|l|c|l|}
    \hline
    \multicolumn{3}{|c|}{Standard Setting ($I_1,\dots,I_n\stackrel{i.i.d.}{\sim}Bernoulli(p),\hat{p}=\frac{1}{n}\sum_{i=1}^n I_i,\hat{s}=n\hat p$)}\\
    \hline
    CLT & $\hat p\pm z_{1-\alpha/2}\sqrt{\frac{\hat{p}(1-\hat{p})}{n}}.$&\makecell*[l]{
    Scale: $\hat{p}_u^{CLT}-\hat{p}=\hat{p}-\hat{p}_l^{CLT}=\Theta(\hat p/\sqrt{\hat{s}})$;\\
    Pros: Follows from the intuitive ``textbook" formula;\\
    Cons: Not always valid, especially when $np$ is not sufficiently large.}
    \\
    \hline
    Wilson &$\frac{1+\frac{2n\hat{p}}{z_{1-\alpha/2}^2}\pm\sqrt{1+\frac{4n\hat{p}(1-\hat{p})}{z_{1-\alpha/2}^2}}}{2\left(1+\frac{n}{z_{1-\alpha/2}^2}\right)}.$ &
    \makecell*[l]{Scale: $|\hat{p}_u^{Wilson}-\hat{p}_u^{CLT}|=O(\hat p/\hat{s}),|\hat{p}_l^{Wilson}-\hat{p}_l^{CLT}|=O(\hat p/\hat{s})$;\\
    Pros: Tight, the coverage probability is usually close to $1-\alpha$;\\
    Cons: Not always valid, lacks theoretical error control.}\\
    \hline
    Exact &\makecell*[c]{Solutions to \\$\sum_{k=0}^{\hat{s}}\binom{n}{k}p^k(1-p)^{n-k}=\alpha/2,$\\
    $\sum_{k=\hat{s}}^n\binom{n}{k}p^k(1-p)^{n-k}=\alpha/2$\\
    except that $\hat{p}^{Exact}_u=1$ if $\hat{s}=n$ and $\hat{p}^{Exact}_l=0$ if $\hat{s}=0$.}& \makecell*[l]{
    Pros: Always valid, tighter than other valid CI;\\
    Cons: Conservative, hard to analyze.}\\
    \hline
    Chernoff &\makecell*[c]{$\hat{p}+\frac{\log(2/\alpha)}{n}+\sqrt{\frac{(\log(2/\alpha))^2}{n^2}+\frac{2\hat{p}\log(2/\alpha)}{n}},$\\
    $\hat{p}+\frac{\log(2/\alpha)}{2n}-\sqrt{\frac{(\log(2/\alpha))^2}{4n^2}+\frac{2\hat{p}\log(2/\alpha)}{n}}.$}& \makecell*[l]{Scale: $\hat{p}_u^{Chernoff}-\hat{p}_u^{CLT}=\Theta(\hat p/\sqrt{\hat{s}}),\hat{p}_l^{CLT}-\hat{p}_l^{Chernoff}=\Theta(\hat p/\sqrt{\hat{s}});$\\
    Pros: Always valid, helps us understand the relative error of $\hat p$;\\
    Cons: Extremely conservative.}\\
    \hline
    B-E & \makecell*[c]{$\left\{0<p\leq \hat p\wedge\frac12:\Phi\left(\frac{p-\hat{p}}{\sqrt{p(1-p)/n}}\right)+\frac{C}{\sqrt{np(1-p)}}\geq\frac{\alpha}{2}\right\}$\\$\cup\left\{\hat p\leq p<\frac12:\Phi\left(\frac{\hat{p}-p}{\sqrt{p(1-p)/n}}\right)+\frac{C}{\sqrt{np(1-p)}}\geq\frac{\alpha}{2}\right\}$\\where $C$ is the universal constant in the B-E theorem.}&\makecell*[l]{Scale: $|\hat{p}_u^{BE}-\hat{p}_u^{CLT}|=O(\hat p/\hat{s})$;\\
    Pros: Always valid, helps us understand the error of the CLT upper bound;\\
    Cons: Extremely conservative, trivial lower bound.}\\
    \hline
    \end{tabular}}
    \end{subtable}
    \\
    \par\bigskip
    \begin{subtable}{\textwidth}
        \centering
    \resizebox{\columnwidth}{!}{\begin{tabular}{|l|c|l|}
    \hline
    \multicolumn{3}{|c|}{Targeted Stopping Setting ($N_1,\dots,N_{n_0}\stackrel{i.i.d.}{\sim}Geometric(p),N=\sum_{i=1}^{n_0}N_i,\hat p=n_0/N$)}\\
    \hline
    CLT &$\frac{n_0}{N}\pm z_{1-\alpha/2}\sqrt{\frac{n_0(N-n_0)}{N^3}}.$&\makecell*[l]{
    Scale: $\hat{p}_{u,n_0}^{CLT}-\hat{p}=\hat{p}-\hat{p}_{l,n_0}^{CLT}=\Theta(\sqrt{n_0}/N)$;\\
    Pros: Follows from the intuitive ``textbook" formula;\\
    Cons: Not always valid.}\\
    \hline
    Wilson &$\frac{1+\frac{2n_0}{z_{1-\alpha/2}^2}\pm\sqrt{1+\frac{4n_0(N-n_0)}{z_{1-\alpha/2}^2N}}}{2\left(1+\frac{N}{z_{1-\alpha/2}^2}\right)}.$&\makecell*[l]{Scale: $|\hat{p}_{u,n_0}^{Wilson}-\hat{p}_{u,n_0}^{CLT}|=O(1/N),|\hat{p}_{l,n_0}^{Wilson}-\hat{p}_{l,n_0}^{CLT}|=O(1/N)$;\\
    Pros: Tight, the coverage probability is usually close to $1-\alpha$;\\
    Cons: Not always valid.}\\
    \hline
    Exact &\makecell*[c]{Solutions to\\$\sum_{k=0}^{N-n_0-1}\binom{k+n_0-1}{n_0-1}(1-p)^kp^{n_0}=1-\alpha/2$,\\
    $\sum_{k=0}^{N-n_0}\binom{k+n_0-1}{n_0-1}(1-p)^kp^{n_0}=\alpha/2$\\
    except that $\hat{p}_{u,n_0}^{Exact}=1$ if $N=n_0$.}& \makecell*[l]{Pros: Always valid, tighter than other valid CI;\\
    Cons: Conservative, hard to analyze.}\\
    \hline
    Chernoff &$\left\{0<p<1:p^{n_0}(1-p)^{N-n_0}\geq \frac{\alpha}{2}\left(\frac{n_0}{N}\right)^{n_0}\left(1-\frac{n_0}{N}\right)^{N-n_0}\right\}.$& \makecell*[l]{Pros: Always valid; \\ Cons: Extremely conservative, hard to analyze.}\\
    \hline
    B-E &\makecell*[c]{$\left\{0< p\leq\frac{n_0}{N}\wedge\frac12:\Phi\left(\frac{Np-n_0}{\sqrt{n_0(1-p)}}\right)+\frac{C'}{\sqrt{n_0(1-p)^3}}\geq\frac{\alpha}{2}\right\}$\\$\cup\left\{\frac{n_0}{N}\leq p<\frac{1}{2}:\Phi\left(\frac{n_0-Np}{\sqrt{n_0(1-p)}}\right)+\frac{C'}{\sqrt{n_0(1-p)^3}}\geq\frac{\alpha}{2}\right\}$\\
    where $C'=16C$ is a universal constant.}&\makecell*[l]{Scale: $\hat{p}_{u,n_0}^{BE}-\hat{p}_{u,n_0}^{CLT}=O(1/N), \hat{p}_{l,n_0}^{CLT}-\hat{p}_{l,n_0}^{BE}=O(1/N)$;\\
    Pros: Aways valid, helps us understand the error of $\hat p$ and the CLT CI;\\
    Cons: Extremely conservative, trivial for small $n_0$.} \\
    \hline
    \end{tabular}}
    \end{subtable}
    \caption{Summary of the CIs.}
    \label{tab:summary}
\end{table}

\section{Developments under Standard Setting}\label{sec:standard}
We present in detail the construction of the two new CIs, Chernoff's CI and the B-E CI, that endows their validity (Section \ref{sec:standard_develop}). Then we analyze the half widths of all our discussed CIs (Section \ref{sec:standard_width}).

\subsection{Derivation of New Confidence Intervals}\label{sec:standard_develop}

\paragraph{Chernoff's CI.}

Now we present our first approach to construct a valid CI for $p$ by relaxing \eqref{eqn:pvalue}. By Chernoff's inequality, we have
\begin{align*}
    \mathbb{P}(\hat{p}\leq(1-\delta)p)&\leq\exp\left(-\frac{\delta^2}{2}np\right),0<\delta<1;\\
    \mathbb{P}(\hat{p}\geq(1+\delta)p)&\leq\exp\left(-\frac{\delta^2}{2+\delta}np\right),\delta>0.
\end{align*}
Replacing $(1-\delta)p$ or $(1+\delta)p$ by $x$, we have
\begin{align*}
    F(x)&\leq \exp\left\{-\left(1-\frac{x}{p}\right)^2\frac{np}{2}\right\},x\leq p;\\
    F_-(x)&\geq 1-\exp\left\{-\frac{\left(\frac{x}{p}-1\right)^2}{1+\frac{x}{p}}np\right\},x\geq p.
\end{align*}
Hence $F(\hat{p})\geq\alpha/2,F_-(\hat{p})\leq1-\alpha/2$ implies that either 
\begin{equation*}
    p\geq\hat{p}\text{ and }\exp\left\{-\left(1-\frac{\hat p}{p}\right)^2\frac{np}{2}\right\}\geq\alpha/2
\end{equation*}
or 
\begin{equation*}
    p\leq\hat{p}\text{ and }1-\exp\left\{-\frac{\left(\frac{\hat p}{p}-1\right)^2}{1+\frac{\hat p}{p}}np\right\}\leq 1-\alpha/2.
\end{equation*}
Therefore, 
\begin{equation*}
    \left\{0< p\leq \hat p:\exp\left\{-\frac{\left(\frac{\hat p}{p}-1\right)^2}{1+\frac{\hat p}{p}}np\right\}\geq\alpha/2\right\}\cup\left\{\hat{p}\leq p< 1:\exp\left\{-\left(1-\frac{\hat p}{p}\right)^2\frac{np}{2}\right\}\geq\alpha/2\right\}
\end{equation*}
is a confidence region for $\hat p$ with confidence level at least $1-\alpha$. Simplifying the expression above, we have that 
\begin{align*}
    0< p\leq \hat p,\exp\left\{-\frac{\left(\frac{\hat p}{p}-1\right)^2}{1+\frac{\hat p}{p}}np\right\}\geq\alpha/2
    \Rightarrow\hat{p}+\frac{\log(2/\alpha)}{2n}-\sqrt{\frac{(\log(2/\alpha))^2}{4n^2}+\frac{2\hat{p}\log(2/\alpha)}{n}}\leq p\leq \hat p
\end{align*}
and
\begin{align*}
    \hat{p}\leq p< 1,\exp\left\{-\left(1-\frac{\hat p}{p}\right)^2\frac{np}{2}\right\}\geq\alpha/2\Rightarrow \hat p\leq p\leq \hat{p}+\frac{\log(2/\alpha)}{n}+\sqrt{\frac{(\log(2/\alpha))^2}{n^2}+\frac{2\hat{p}\log(2/\alpha)}{n}}.
\end{align*}
Hence, by taking the union, we get a valid $(1-\alpha)$-level CI for $p$, for any finite sample $n$. This can be summarized as:
\begin{theorem}[Validity of Chernoff's CI under Standard Setting]
The interval given by 
\begin{align*}
    \hat{p}^{Chernoff}_u&=\hat{p}+\frac{\log(2/\alpha)}{n}+\sqrt{\frac{(\log(2/\alpha))^2}{n^2}+\frac{2\hat{p}\log(2/\alpha)}{n}},\\
    \hat{p}^{Chernoff}_l&=\hat{p}+\frac{\log(2/\alpha)}{2n}-\sqrt{\frac{(\log(2/\alpha))^2}{4n^2}+\frac{2\hat{p}\log(2/\alpha)}{n}}
\end{align*}
is a valid $(1-\alpha)$-level CI for $p$, for any finite sample $n$. That is, $\mathbb{P}(\hat{p}_l^{Chernoff}\leq p\leq \hat{p}_u^{Chernoff})\geq1-\alpha$ for any $n$.\label{bound con}
\end{theorem}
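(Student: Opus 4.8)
The plan is to assemble the chain of (random) set inclusions that the preceding derivation has set up, and to verify the two quadratic inversions. The backbone is the observation, established around \eqref{eqn:pvalue}, that
\[
\mathcal R:=\left\{0<p<1:\ F(\hat p)\geq\alpha/2,\ F_-(\hat p)\leq 1-\alpha/2\right\}
\]
is a valid $(1-\alpha)$-level confidence region, i.e.\ it contains the true $p$ with probability at least $1-\alpha$. Since any random set containing $\mathcal R$ also contains $p$ with probability at least $1-\alpha$, it suffices to show $\mathcal R\subseteq[\hat p_l^{Chernoff},\hat p_u^{Chernoff}]$ pointwise in $\hat p$.

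First I would invoke the two standard multiplicative Chernoff bounds for $n\hat p\sim Binomial(n,p)$, namely $P(\hat p\leq(1-\delta)p)\leq e^{-\delta^2 np/2}$ for $\delta\in(0,1)$ and $P(\hat p\geq(1+\delta)p)\leq e^{-\delta^2 np/(2+\delta)}$ for $\delta>0$, and substitute $x=(1-\delta)p$ (resp.\ $x=(1+\delta)p$) to obtain $F(x)\leq\exp\{-(1-x/p)^2 np/2\}$ for $x\leq p$ and $1-F_-(x)\leq\exp\{-(x/p-1)^2 np/(1+x/p)\}$ for $x\geq p$. These turn the two defining conditions of $\mathcal R$ into explicit exponential inequalities in $p$, after the case split described next.

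Second, I would split on whether the realized $p$ satisfies $p\geq\hat p$ or $p\leq\hat p$. If $p\geq\hat p$, plugging $x=\hat p$ into the first bound and using $F(\hat p)\geq\alpha/2$ forces $\exp\{-(1-\hat p/p)^2 np/2\}\geq\alpha/2$; taking logarithms this reads $(p-\hat p)^2/p\leq(2/n)\log(2/\alpha)$, a quadratic in $p$ whose larger root is exactly $\hat p_u^{Chernoff}$, so $p\in[\hat p,\hat p_u^{Chernoff}]$. Symmetrically, if $p\leq\hat p$, using $F_-(\hat p)\leq 1-\alpha/2$ and the second bound forces $\exp\{-(\hat p/p-1)^2 np/(1+\hat p/p)\}\geq\alpha/2$, i.e.\ $(\hat p-p)^2/(\hat p+p)\leq(1/n)\log(2/\alpha)$, whose relevant root is $\hat p_l^{Chernoff}$, giving $p\in[\hat p_l^{Chernoff},\hat p]$. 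Taking the union of the two cases yields $\mathcal R\subseteq[\hat p_l^{Chernoff},\hat p_u^{Chernoff}]$, and validity follows from the monotonicity of coverage under set inclusion.

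I do not anticipate a serious obstacle: the only genuine work is the two quadratic inversions in the last step — keeping track of which root to take and checking that the sign conventions are consistent with the constraints $p\geq\hat p$ and $p\leq\hat p$ — together with a brief check of degenerate cases such as $\hat p=0$ (where the lower bound is clamped at $0$) or $\alpha$ near $1$. The Chernoff bounds themselves are classical, and the matching of each tail event to the correct bound is routine.
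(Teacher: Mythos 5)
Your proposal is correct and follows essentially the same route as the paper: relax the valid confidence region in \eqref{eqn:pvalue} via the two multiplicative Chernoff bounds, split on $p\geq\hat p$ versus $p\leq\hat p$, and invert the resulting quadratics, whose roots indeed match $\hat p_u^{Chernoff}$ and $\hat p_l^{Chernoff}$. No gaps; the only detail the paper also leaves implicit is the routine root-selection check you already flag.
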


\paragraph{B-E CI.}

We develop another CI for $p$ by inverting the B-E theorem. Here, we assume that $p$ is known to satisfy $p<\frac12$ a priori (which is reasonable if we consider rare event).  In this paper, we use the standard version of the B-E theorem, and a potential future investigation is to consider a B-E bound for the studentized statistic \cite{wang1999exponential,wang2009relative}.

By the B-E theorem, we have that 
\begin{align*}
    \left|\mathbb{P}\left((\hat{p}-p)\sqrt{\frac{n}{p(1-p)}}\le x\right)-\Phi(x)\right|&\le \frac{C}{\sqrt{np(1-p)}},\\
    \left|\mathbb{P}\left((p-\hat p)\sqrt{\frac{n}{p(1-p)}}\le x\right)-\Phi(x)\right|&\le \frac{C}{\sqrt{np(1-p)}}
\end{align*}
where $C$ is a universal constant. We replace 
$x$ by $\frac{\hat{p}-p}{\sqrt{p(1-p)/n}}$ in the first inequality and $\frac{p-\hat{p}}{\sqrt{p(1-p)/n}}$ in the second one. Then we get that 
\begin{align*}
    \left|F(\hat{p})-\Phi\left(\frac{\hat{p}-p}{\sqrt{p(1-p)/n}}\right)\right|&\leq\frac{C}{\sqrt{np(1-p)}};\\
    \left|1-F_-(\hat{p})-\Phi\left(\frac{p-\hat{p}}{\sqrt{p(1-p)/n}}\right)\right|&\leq\frac{C}{\sqrt{np(1-p)}}.
\end{align*}
Hence $F(\hat{p})\geq\alpha/2,F_-(\hat{p})\leq1-\alpha/2$ implies that either 
\begin{equation*}
    p\geq\hat{p}\text{ and }\Phi\left(\frac{\hat{p}-p}{\sqrt{p(1-p)/n}}\right)+\frac{C}{\sqrt{np(1-p)}}\geq\alpha/2
\end{equation*}
or
\begin{equation*}
    p\leq\hat{p}\text{ and }\Phi\left(\frac{p-\hat{p}}{\sqrt{p(1-p)/n}}\right)+\frac{C}{\sqrt{np(1-p)}}\geq\alpha/2.
\end{equation*}
Thus, 
\begin{equation*}
    \left\{0<p\leq \hat p:\Phi\left(\frac{p-\hat{p}}{\sqrt{p(1-p)/n}}\right)+\frac{C}{\sqrt{np(1-p)}}\geq\frac{\alpha}{2}\right\}\cup\left\{\hat p\leq p<1:\Phi\left(\frac{\hat{p}-p}{\sqrt{p(1-p)/n}}\right)+\frac{C}{\sqrt{np(1-p)}}\geq\frac{\alpha}{2}\right\}
\end{equation*}
is a valid $(1-\alpha)$-level confidence region for $p$. Since we have assumed that $p<1/2$, the above confidence region can be further shrunk. To summarize, we have the following theorem:
\begin{theorem}[Validity of B-E CI under Standard Setting]
Assume that $p<1/2$. Then the set 
\begin{align}
    &\left\{0<p\leq \hat p\wedge\frac12:\Phi\left(\frac{p-\hat{p}}{\sqrt{p(1-p)/n}}\right)+\frac{C}{\sqrt{np(1-p)}}\geq\frac{\alpha}{2}\right\}\nonumber\\
    \cup&\left\{\hat p\leq p<\frac12:\Phi\left(\frac{\hat{p}-p}{\sqrt{p(1-p)/n}}\right)+\frac{C}{\sqrt{np(1-p)}}\geq\frac{\alpha}{2}\right\}
    \label{eqn:BE_region}
\end{align}
is a valid $(1-\alpha)$-level confidence region for $p$, for any finite sample $n$. 
\end{theorem}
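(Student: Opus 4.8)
The plan is to formalize the derivation sketched just above, following the same template that established Theorem~\ref{bound con}: begin from the exactly valid but intractable confidence region \eqref{eqn:pvalue}, replace the binomial tail functions $F(\hat p)$ and $F_-(\hat p)$ by normal surrogates carrying a two-sided Berry--Esseen (B-E) error band, enlarge the region so that only one constraint is kept in each of the two cases $p\le\hat p$ and $p\ge\hat p$, and finally intersect the result with the a priori set $\{p<\tfrac12\}$.

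Concretely, I would first record the B-E consequences precisely. Writing $S_n=(\hat p-p)\sqrt{n/(p(1-p))}$, the B-E theorem gives $\sup_t|P(S_n\le t)-\Phi(t)|\le C/\sqrt{np(1-p)}$ for deterministic thresholds $t$, and the same bound holds for the left-continuous version $P(S_n<t)$ because $\Phi$ is continuous; evaluating these at $t$ equal to the realized value of $\hat p$ (a deterministic substitution once the data are observed) turns them into $|F(\hat p)-\Phi((\hat p-p)/\sqrt{p(1-p)/n})|\le C/\sqrt{np(1-p)}$ and $|1-F_-(\hat p)-\Phi((p-\hat p)/\sqrt{p(1-p)/n})|\le C/\sqrt{np(1-p)}$. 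Hence on the event $\{F(\hat p)\ge\alpha/2\}\cap\{F_-(\hat p)\le 1-\alpha/2\}$ both inequalities $\Phi((\hat p-p)/\sqrt{p(1-p)/n})+C/\sqrt{np(1-p)}\ge\alpha/2$ and $\Phi((p-\hat p)/\sqrt{p(1-p)/n})+C/\sqrt{np(1-p)}\ge\alpha/2$ hold. Keeping the first inequality when $p\ge\hat p$ and the second when $p\le\hat p$, this shows that the valid region \eqref{eqn:pvalue} is contained in
\begin{equation*}
\left\{0<p\le\hat p:\ \Phi\!\left(\frac{p-\hat p}{\sqrt{p(1-p)/n}}\right)+\frac{C}{\sqrt{np(1-p)}}\ge\frac\alpha2\right\}\ \cup\ \left\{\hat p\le p<1:\ \Phi\!\left(\frac{\hat p-p}{\sqrt{p(1-p)/n}}\right)+\frac{C}{\sqrt{np(1-p)}}\ge\frac\alpha2\right\},
\end{equation*}
and since any superset of a valid $(1-\alpha)$-level confidence region is again valid, this union has coverage at least $1-\alpha$ for every $n$.

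It then remains to use $p<\tfrac12$. Intersecting any valid confidence region with a set known a priori to contain $p$ preserves validity (the indicator of $\{p<\tfrac12\}$ is deterministically $1$), so intersecting the union above with $\{p<\tfrac12\}$ is still valid; this intersection replaces $\{\hat p\le p<1\}$ by $\{\hat p\le p<\tfrac12\}$ and $\{0<p\le\hat p\}$ by $\{0<p\le\hat p\wedge\tfrac12\}$, where the meet with $\tfrac12$ takes care of the case $\hat p\ge\tfrac12$ in which $p<\tfrac12$ already forces $p\le\hat p$. This is exactly \eqref{eqn:BE_region}, which is therefore a valid $(1-\alpha)$-level confidence region. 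I do not expect a genuine obstacle: the only points needing care are tracking the direction of the inequalities so that each substitution only enlarges the region rather than shrinking it (which is what makes the inflation preserve validity at all finite $n$), the small measure-theoretic remark that B-E also controls $P(\hat p<t)$ so that $F_-$ is legitimately covered, and checking that the ``$\wedge\tfrac12$'' truncation in the first set is precisely what the intersection with $\{p<\tfrac12\}$ produces.
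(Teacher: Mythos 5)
Your proposal is correct and follows essentially the same route as the paper: invert the exactly valid region \eqref{eqn:pvalue} by replacing $F(\hat p)$ and $1-F_-(\hat p)$ with their normal surrogates plus the uniform Berry--Esseen error, split on $p\le\hat p$ versus $p\ge\hat p$, and then shrink by the a priori restriction $p<\tfrac12$. Your added remarks (uniformity in $t$ justifying the plug-in of the realized $\hat p$, and continuity of $\Phi$ covering the left-continuous $F_-$) are just more explicit versions of steps the paper takes silently.
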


\subsection{Analyses of Half-Widths}\label{sec:standard_width}

\paragraph{CLT CI.} Clearly, $\hat{p}_u^{CLT}-\hat p=\Theta(\sqrt{\hat p/n})=\Theta(\hat p/\sqrt{\hat{s}})$ and $\hat{p}-\hat{p}_l^{CLT}=\Theta(\hat p/\sqrt{\hat{s}})$. As explained in Section \ref{sec:problem}, we express the half-width as $\Theta(\hat p/\sqrt{\hat{s}})$ instead of $\Theta(\sqrt{\hat p/n})$ in order to understand the magnitude of the relative half-width with respect to $\hat p$ more clearly.

\paragraph{Wilson's CI.} 

We derive by some algebraic manipulations that

\begin{theorem}[Half-Width of Wilson's CI under Standard Setting]
\begin{align*}
|\hat{p}_u^{Wilson}-\hat{p}_u^{CLT}|&\leq \frac{z_{1-\alpha/2}^2}{n}+\frac{z_{1-\alpha/2}^3}{2n^{3/2}};\\
|\hat{p}_l^{Wilson}-\hat{p}_l^{CLT}|&\leq \frac{z_{1-\alpha/2}^2}{n}+\frac{z_{1-\alpha/2}^3}{2n^{3/2}}.
\end{align*}
\label{thm:Wilson_error}
\end{theorem}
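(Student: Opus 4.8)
The plan is to reduce everything to elementary estimates on an algebraically equivalent ``textbook'' form of Wilson's interval. Abbreviate $z=z_{1-\alpha/2}$ and $s=\sqrt{\hat p(1-\hat p)/n}$, so that $\hat p_u^{CLT}=\hat p+zs$ and $\hat p_l^{CLT}=\hat p-zs$. First I would rewrite the Wilson endpoints in the standard ``centered'' form
\[
\hat p_{u,l}^{Wilson}=\frac{\hat p+\frac{z^2}{2n}\pm z\sqrt{s^2+\frac{z^2}{4n^2}}}{1+\frac{z^2}{n}},
\]
which follows from the definition by multiplying its numerator and denominator by $z^2/(2n)$ (equivalently, by completing the square inside the radical).

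Next I would subtract $\hat p_u^{CLT}$ from $\hat p_u^{Wilson}$, place the difference over the common denominator $1+z^2/n$, and expand the numerator. The $\hat p$ contributions cancel, and collecting the remaining terms the numerator equals
\[
\frac{z^2}{2n}\bigl(1-2\hat p\bigr)+z\Bigl(\sqrt{s^2+\tfrac{z^2}{4n^2}}-s\Bigr)-\frac{z^3 s}{n}.
\]
The parallel computation for the lower endpoints yields the same three terms with the signs of the last two reversed, so a single set of absolute-value bounds covers both cases.

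Then I would bound the three pieces. The first is at most $z^2/(2n)$ because $|1-2\hat p|\le1$. For the second, the elementary inequality $\sqrt{a^2+b^2}\le a+b$ for $a,b\ge0$ gives $\sqrt{s^2+z^2/(4n^2)}-s\le z/(2n)$, so this term is at most $z^2/(2n)$; importantly this form of the bound, rather than the naive $z^2/(4n^2)/s$, remains finite when $\hat p$ (hence $s$) is small, which is exactly the rare-event regime of interest and also covers the degenerate case $\hat p=0$. For the third, $s\le1/(2\sqrt n)$ gives $z^3 s/n\le z^3/(2n^{3/2})$. Since $1+z^2/n\ge1$, dividing the (triangle-inequality) numerator bound by the denominator yields $|\hat p_u^{Wilson}-\hat p_u^{CLT}|\le z^2/n+z^3/(2n^{3/2})$, and the identical estimate gives the lower-endpoint bound.

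This is a routine computation with no real obstacle beyond careful bookkeeping; the only mild subtlety is the choice of the radical estimate $\sqrt{s^2+z^2/(4n^2)}-s\le z/(2n)$, which must be the version that does not divide by $s$ so that the bound stays uniform as $\hat p\to0$.
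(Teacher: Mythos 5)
Your proof is correct and is essentially the paper's argument in disguise: after rescaling the paper's numerator and denominator by $z^2/(2n)$, your three-term decomposition and the bounds $|1-2\hat p|\le 1$, $\sqrt{\hat p(1-\hat p)}\le \tfrac12$, and the radical estimate match the paper's term by term (the paper bounds the radical difference by rationalizing to $1/(\text{sum})\le 1$, you use $\sqrt{a^2+b^2}\le a+b$, which gives the identical bound). No gaps; the centered Wilson form and lower-endpoint sign flip are both handled correctly.
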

Note that $1/n=\hat p/\hat{s}$, so the difference between the Wilson's CI and the CLT CI is of order $O(\hat p/\hat{s})$, which is of higher order than $\hat p/\sqrt{\hat{s}}$ in $\hat{s}$. In fact, as long as $\hat{s}\geq 1$, that is, we have at least one positive observation, then $\hat p/\sqrt{\hat{s}}=\sqrt{\hat{p}/n}=\sqrt{\hat{s}}/n\geq 1/n$. Since the half-width of the CLT CI is of order $\hat p/\sqrt{\hat{s}}$, we get that the half-width of the Wilson's CI is close to the CLT CI. 

\paragraph{Chernoff's CI.} 

When $\hat p=0$, Chernoff's CI reduces to $[0, 2\log(1/\alpha)/n]$ (and in fact we can construct even tighter bounds by using the binomial distribution of $n\hat p$ directly in this case). On the other hand, when $\hat p>0$, we can re-express using $\hat{s}=n\hat p$ to get 
\begin{align*}
    \hat{p}^{Chernoff}_u&=\hat{p}\left(1+\frac{\log(2/\alpha)}{\hat{s}}+\sqrt{\frac{(\log(2/\alpha))^2}{\hat{s}^2}+\frac{2\log(2/\alpha)}{\hat{s}}}\right),\\
    \hat{p}^{Chernoff}_l&=\hat{p}\left(1+\frac{\log(2/\alpha)}{2\hat{s}}-\sqrt{\frac{(\log(2/\alpha))^2}{4\hat{s}^2}+\frac{2\log(2/\alpha)}{\hat{s}}}\right).
\end{align*}
We highlight that in this case, the half-width of Chernoff's CI is of order $\Theta(\hat p/\sqrt{\hat{s}})$, which scales in the same order as the CLT CI. If we check the difference between this interval and the CLT interval, then we will find that it is of the same order as the half-width of the CLT CI. The following theorem presents the details of this claim. We will contrast this result with another one presented momentarily.
\begin{theorem}[Half-Width of Chernoff's CI under Standard Setting]
\begin{align*}
    \hat{p}_u^{Chernoff}-\hat{p}_u^{CLT}&\geq (\sqrt{2\log(2/\alpha)}-z_{1-\alpha/2})\sqrt{\frac{\hat{p}}{n}}+\frac{\log(2/\alpha)}{n},\\
    \hat{p}_l^{CLT}-\hat{p}_l^{Chernoff}&\geq(\sqrt{2\log(2/\alpha)}-z_{1-\alpha/2})\sqrt{\frac{\hat{p}}{n}}-\frac{\log(2/\alpha)}{2n}.
\end{align*}
Note that $\sqrt{2\log(2/\alpha)}-z_{1-\alpha/2}>0$ for $0<\alpha<1$.\label{thm:chernoff_error}
\end{theorem}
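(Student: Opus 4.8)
The plan is to start from the explicit formulas for $\hat p_u^{Chernoff}$ and $\hat p_l^{Chernoff}$ (with $\hat p > 0$) and subtract the corresponding CLT bounds termwise, isolating the square-root terms. For the upper bound, I would write
\[
\hat p_u^{Chernoff} - \hat p_u^{CLT} = \frac{\log(2/\alpha)}{n} + \sqrt{\frac{(\log(2/\alpha))^2}{n^2} + \frac{2\hat p\log(2/\alpha)}{n}} - z_{1-\alpha/2}\sqrt{\frac{\hat p(1-\hat p)}{n}},
\]
then bound the Chernoff square root below by dropping the first term under the radical, giving $\sqrt{2\hat p\log(2/\alpha)/n} = \sqrt{2\log(2/\alpha)}\sqrt{\hat p/n}$, and bound the CLT square root above by $\sqrt{\hat p/n}$ (using $1-\hat p \le 1$). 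Collecting terms yields exactly the claimed inequality $\hat p_u^{Chernoff} - \hat p_u^{CLT} \ge (\sqrt{2\log(2/\alpha)} - z_{1-\alpha/2})\sqrt{\hat p/n} + \log(2/\alpha)/n$.

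For the lower bound I would proceed analogously but more carefully, since the direction of the inequalities flips. Here
\[
\hat p_l^{CLT} - \hat p_l^{Chernoff} = z_{1-\alpha/2}\sqrt{\frac{\hat p(1-\hat p)}{n}} - \frac{\log(2/\alpha)}{2n} + \sqrt{\frac{(\log(2/\alpha))^2}{4n^2} + \frac{2\hat p\log(2/\alpha)}{n}} - \frac{\log(2/\alpha)}{2n}.
\]
Wait --- I need to recompute: $\hat p_l^{CLT} = \hat p - z_{1-\alpha/2}\sqrt{\hat p(1-\hat p)/n}$ and $\hat p_l^{Chernoff} = \hat p + \log(2/\alpha)/(2n) - \sqrt{(\log(2/\alpha))^2/(4n^2) + 2\hat p\log(2/\alpha)/n}$, so the difference is $-z_{1-\alpha/2}\sqrt{\hat p(1-\hat p)/n} - \log(2/\alpha)/(2n) + \sqrt{(\log(2/\alpha))^2/(4n^2) + 2\hat p\log(2/\alpha)/n}$. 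To get a lower bound I must bound the Chernoff radical below by $\sqrt{2\hat p\log(2/\alpha)/n} = \sqrt{2\log(2/\alpha)}\sqrt{\hat p/n}$, and I must bound $-z_{1-\alpha/2}\sqrt{\hat p(1-\hat p)/n}$ below by $-z_{1-\alpha/2}\sqrt{\hat p/n}$ (again via $1-\hat p \le 1$). This gives $\hat p_l^{CLT} - \hat p_l^{Chernoff} \ge (\sqrt{2\log(2/\alpha)} - z_{1-\alpha/2})\sqrt{\hat p/n} - \log(2/\alpha)/(2n)$, as claimed.

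Finally I would verify the sign remark $\sqrt{2\log(2/\alpha)} - z_{1-\alpha/2} > 0$ for all $\alpha \in (0,1)$. This follows from the standard Gaussian tail bound $\bar\Phi(t) \le \frac12 e^{-t^2/2}$ for $t \ge 0$: setting $t = z_{1-\alpha/2}$ gives $\alpha/2 = \bar\Phi(z_{1-\alpha/2}) \le \frac12 e^{-z_{1-\alpha/2}^2/2}$, hence $e^{-z_{1-\alpha/2}^2/2} \ge \alpha$, i.e. $z_{1-\alpha/2}^2 \le 2\log(1/\alpha) < 2\log(2/\alpha)$, which is strictly less than $2\log(2/\alpha)$. (One should check $z_{1-\alpha/2} \ge 0$, which holds whenever $\alpha \le 1$, and handle the trivial case where both sides vanish.)

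The computations here are entirely elementary; the only place requiring a little care is making sure each bound is applied in the correct direction (under-estimating the Chernoff radical while over-estimating the CLT radical for the upper-bound statement, and the mirror-image choices for the lower-bound statement), together with invoking the Gaussian tail inequality to certify positivity of the leading coefficient. I do not anticipate any genuine obstacle; the main point of the theorem is interpretive --- it shows the Chernoff bound differs from the CLT bound by a quantity of the same order $\Theta(\hat p/\sqrt{\hat n})$ as the half-width itself, which is the conclusion drawn in the surrounding discussion.
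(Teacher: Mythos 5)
Your argument is correct and is essentially the paper's own proof: in both cases one drops the $(\log(2/\alpha))^2/n^2$ (resp.\ $(\log(2/\alpha))^2/(4n^2)$) term under the Chernoff radical to get the lower bound $\sqrt{2\log(2/\alpha)}\sqrt{\hat p/n}$, and uses $1-\hat p\le 1$ to dominate the CLT term, yielding the stated inequalities (your corrected expression for $\hat p_l^{CLT}-\hat p_l^{Chernoff}$ is the right one). Your additional verification of $\sqrt{2\log(2/\alpha)}>z_{1-\alpha/2}$ via the Gaussian tail bound $\bar\Phi(t)\le\tfrac12 e^{-t^2/2}$ is a valid justification of a remark the paper states without proof.
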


We recall that $1/n=\hat p/\hat{s}$ is of higher order than $\sqrt{\hat p/\hat{s}}=\hat p/\sqrt{\hat{s}}$. Provided that $\sqrt{2\log(2/\alpha)}-z_{1-\alpha/2}>0$,  $\hat{p}_u^{Chernoff}-\hat{p}_u^{CLT}$ (or, $\hat{p}_l^{Chernoff}-\hat{p}_l^{CLT}$) is of no higher order than $\hat{p}_u^{CLT}-\hat{p}$ (or, $\hat{p}_l^{CLT}-\hat{p}$).

\paragraph{B-E CI.}

We focus on the confidence upper bound as, unfortunately, we cannot derive a non-trivial confidence lower bound from \eqref{eqn:BE_region} since any $0<p<1/2$ such that $C/\sqrt{np(1-p)}\geq\alpha/2$ is contained in this confidence region. Now we further relax \eqref{eqn:BE_region} to develop a more explicit upper bound. In particular, \eqref{eqn:BE_region} could be relaxed to 
\begin{equation*}
    \left\{0<p<1/2:\Phi\left(\frac{\hat{p}-p}{\sqrt{p(1-p)/n}}\right)+\frac{C}{\sqrt{np(1-p)}}\geq\alpha/2\right\}.
\end{equation*}

In fact, for any $0\le \lambda\le 1-\frac{4C}{\sqrt{n}\alpha}$, we have that 
$$
0<p<1/2,\frac{C}{\sqrt{np(1-p)}}\ge(1-\lambda)\alpha/2\Rightarrow 
0<p\le \frac{1-\sqrt{1-\frac{16C^2}{n(1-\lambda)^2\alpha^2}}}{2}
$$
and 
$$
0<p<1/2,\Phi\left(\frac{\hat{p}-p}{\sqrt{p(1-p)/n}}\right)\ge\lambda\alpha/2\Rightarrow 0<p\le \frac{1+\frac{2n\hat{p}}{z_{\lambda\alpha/2}^2}+\sqrt{1+\frac{4n\hat{p}(1-\hat{p})}{z_{\lambda\alpha/2}^2}}}{2\left(1+\frac{n}{z_{\lambda\alpha/2}^2}\right)}.
$$
Therefore, we get that 
$$
0<p\leq\left(\frac{1-\sqrt{1-\frac{16C^2}{n(1-\lambda)^2\alpha^2}}}{2}\right)\lor\left(\frac{1+\frac{2n\hat{p}}{z_{\lambda\alpha/2}^2}+\sqrt{1+\frac{4n\hat{p}(1-\hat{p})}{z_{\lambda\alpha/2}^2}}}{2\left(1+\frac{n}{z_{\lambda\alpha/2}^2}\right)}\right)
$$
is a $(1-\alpha)$-level CI. For simplicity, we denote the two parts as $U_1$ and $U_2$ respectively. One may note that $\lambda$ is not necessarily deterministic. Instead, it can be dependent on the data as long as it stays within the interval $[0,1-\frac{4C}{\sqrt{n}\alpha}]$. In fact, we may choose $\lambda$ carefully such that $U_1\leq U_2$ is guaranteed for sufficiently large $n$. Specifically, the following theorem proposes another valid CI. 
\begin{theorem}[Relaxed B-E CI under Standard Setting]
Assume that $p<1/2$. Let 
$$
\lambda = 1-\frac{2\tilde{C}}{\sqrt{n}\alpha}
$$
where 
$$\tilde{C}=\left(\frac{C}{\sqrt{\hat{p}(1-\hat{p})}}\right)\wedge \left(\frac{u\sqrt{n}\alpha}{2}\right).$$
Here, $u<1$ is any constant such that $\frac{4C^2}{u^2\alpha^2}<z_{(1-u)\alpha/2}^2$. In the case that $\hat{p}=0\text{ or }1$, naturally we set $\tilde{C}=u\sqrt{n}\alpha/2$. Then there exists $N_0$, which does not depend on $p$ and $\hat{p}$, such that for any $n>N_0$,
$$
\hat{p}_u^{BE}=\frac{1+\frac{2n\hat{p}}{z_{\lambda\alpha/2}^2}+\sqrt{1+\frac{4n\hat{p}(1-\hat{p})}{z_{\lambda\alpha/2}^2}}}{2\left(1+\frac{n}{z_{\lambda\alpha/2}^2}\right)},\hat{p}_l^{BE}=0
$$
is a valid $(1-\alpha)$-level CI for $p$. In particular, $N_0$ can be chosen as 
$$
\left(\frac{4C}{u\alpha}\right)^2\lor \frac{12z_{(1-u)\alpha/2}^2C^2}{z_{(1-u)\alpha/2}^2u^2\alpha^2-4C^2}.
$$
\label{thm:BE_bound}
\end{theorem}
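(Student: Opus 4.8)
The plan is to build directly on the construction carried out just before the theorem, which shows that for \emph{any} $\lambda\in[0,1-\frac{4C}{\sqrt n\alpha}]$ — and since that argument applies realization by realization, $\lambda$ may be data-dependent — the set $\{0<p\le U_1\lor U_2\}$ is a valid $(1-\alpha)$-level confidence region, with $U_1,U_2$ as displayed there. Hence it suffices to establish two things for the $\lambda$ prescribed in the theorem: (i) this $\lambda$ lies in $[0,1-\frac{4C}{\sqrt n\alpha}]$, and (ii) $U_1\le U_2$, so that the region collapses to $[0,U_2]=[0,\hat p_u^{BE}]$ while $\hat p_l^{BE}=0$ is a valid (if vacuous) lower bound.

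For (i): writing $\lambda=1-\frac{2\tilde C}{\sqrt n\alpha}$, the requirement $\lambda\ge0$ amounts to $\tilde C\le\frac{\sqrt n\alpha}{2}$, which always holds — indeed $\tilde C\le\frac{u\sqrt n\alpha}{2}<\frac{\sqrt n\alpha}{2}$ since $u<1$, which even gives $\lambda>0$ strictly, so that $z_{\lambda\alpha/2}$ is a finite positive quantile. The requirement $\lambda\le1-\frac{4C}{\sqrt n\alpha}$ amounts to $\tilde C\ge2C$; one term of the minimum defining $\tilde C$, namely $\frac{C}{\sqrt{\hat p(1-\hat p)}}$, is always $\ge2C$ because $\hat p(1-\hat p)\le\frac14$, so this reduces to $\frac{u\sqrt n\alpha}{2}\ge2C$, i.e.\ $n\ge(4C/(u\alpha))^2$ — the first term of $N_0$.

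For (ii), I would first simplify $U_1$: from $1-\lambda=\frac{2\tilde C}{\sqrt n\alpha}$ one gets $\frac{16C^2}{n(1-\lambda)^2\alpha^2}=\frac{4C^2}{\tilde C^2}\in(0,1]$ (the upper bound again using $\tilde C\ge2C$), so $U_1=\tfrac12\bigl(1-\sqrt{1-4C^2/\tilde C^2}\bigr)$. I would then split according to which term attains the minimum in $\tilde C$. If $\tilde C=\frac{C}{\sqrt{\hat p(1-\hat p)}}$, then $4C^2/\tilde C^2=4\hat p(1-\hat p)$ and $U_1=\tfrac12(1-|1-2\hat p|)=\hat p\wedge(1-\hat p)$; bounding the square-root term in $U_2$ from below by $1$ gives $U_2\ge\frac{n\hat p+z_{\lambda\alpha/2}^2}{n+z_{\lambda\alpha/2}^2}$, and one checks $\frac{n\hat p+z^2}{n+z^2}\ge\hat p\wedge(1-\hat p)$ for every $\hat p\in[0,1]$ by elementary algebra, with no condition on $n$. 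If instead $\tilde C=\frac{u\sqrt n\alpha}{2}$ (which covers $\hat p\in\{0,1\}$ by the stated convention), then $\lambda=1-u$, $z_{\lambda\alpha/2}=z_{(1-u)\alpha/2}=:z$, and $4C^2/\tilde C^2=\frac{16C^2}{u^2n\alpha^2}$; for $\hat p\ge\tfrac12$ one gets $U_1\le\tfrac12\le\frac{n\hat p+z^2}{n+z^2}\le U_2$ outright, and for $\hat p<\tfrac12$, since here $U_1$ does not depend on $\hat p$ while $U_2\ge\frac{z^2}{z^2+n}$ always, it remains to prove $\tfrac12\bigl(1-\sqrt{1-\tfrac{16C^2}{u^2n\alpha^2}}\bigr)\le\frac{z^2}{z^2+n}$.

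This last inequality is the crux. Isolating the square root and squaring (legitimate once $n>(4C/(u\alpha))^2$, which makes $\frac{16C^2}{u^2n\alpha^2}<1$), it is equivalent to $\frac{16C^2}{u^2n\alpha^2}\le\frac{4nz^2}{(n+z^2)^2}$, i.e.\ to $\frac{4C^2}{u^2\alpha^2}\le z^2\bigl(1+z^2/n\bigr)^{-2}$. The right-hand side increases to $z^2=z_{(1-u)\alpha/2}^2$ as $n\to\infty$, and the hypothesis on $u$ is exactly $\frac{4C^2}{u^2\alpha^2}<z_{(1-u)\alpha/2}^2$, so there is a strictly positive gap to exploit; feeding in a crude estimate such as $(1+z^2/n)^{-2}\ge1-2z^2/n$ converts the requirement into a lower bound on $n$ of order $\frac{z^2C^2}{z^2u^2\alpha^2-4C^2}$ — the second term of $N_0$, with the universal factor $12$ absorbing the slack in such crude bounds. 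Since every threshold produced depends only on $C$, $u$, $\alpha$, the resulting $N_0$ is free of $p$ and $\hat p$, as claimed. I expect this last quantitative step — making the asymptotic gap condition effective uniformly over all realizations of $\hat p$, in particular over the boundary case $\hat p=0$ where $U_2$ is smallest — to be the only real obstacle; the remainder is routine algebra.
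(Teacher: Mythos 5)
Your overall route is the same as the paper's: reduce the claim to (i) $0\le\lambda\le1-\frac{4C}{\sqrt n\alpha}$ (so the preceding derivation, which indeed allows data-dependent $\lambda$, applies) and (ii) $U_1\le U_2$, with (ii) handled by a case split according to which term attains the minimum defining $\tilde C$. Your part (i), and the case $\tilde C=C/\sqrt{\hat p(1-\hat p)}$ where $U_1=\hat p\wedge(1-\hat p)\le\hat p\le U_2$, are correct and match the paper (your extra sub-split on $\hat p\ge\tfrac12$ in the other case is harmless but unnecessary, since $U_2\ge\frac{z^2}{z^2+n}$ with $z=z_{(1-u)\alpha/2}$ works for all $\hat p$).

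The gap is precisely in the last quantitative step, where you flag it. The theorem asserts the specific value $N_0=(4C/(u\alpha))^2\lor\frac{12z^2C^2}{z^2u^2\alpha^2-4C^2}$, and your argument does not deliver it: the estimate $(1+z^2/n)^{-2}\ge1-2z^2/n$ reduces the needed inequality to $\frac{4C^2}{u^2\alpha^2}\le z^2-\frac{2z^4}{n}$, i.e.\ to $n\ge\frac{2z^4u^2\alpha^2}{z^2u^2\alpha^2-4C^2}$, and this threshold exceeds the stated one whenever $z^2u^2\alpha^2>6C^2$ (the hypothesis only guarantees $z^2u^2\alpha^2>4C^2$, and in typical choices of $u$ it is far larger), so the factor $12$ does not ``absorb the slack'' of that crude bound. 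The paper closes this step by bounding $U_1$ rather than the right-hand side: with $x=\frac{16C^2}{nu^2\alpha^2}<1$ (guaranteed by $n>(4C/(u\alpha))^2$), rationalize to get $U_1=\frac{x}{2(1+\sqrt{1-x})}\le\frac{x}{2(2-x)}=\frac{4C^2}{nu^2\alpha^2-8C^2}$, and then $\frac{4C^2}{nu^2\alpha^2-8C^2}\le\frac{z^2}{z^2+n}$ cross-multiplies exactly to $n\ge\frac{12z^2C^2}{z^2u^2\alpha^2-4C^2}$, which is where the constant $12$ comes from. As written, your proposal proves only the weaker statement that some $N_0$ depending on $C$, $u$, $\alpha$ exists; substituting the above computation makes it a complete proof of the theorem as stated.
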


Actually, $\hat{p}_u^{BE}$ itself is a valid $(1-\alpha/2)$-level confidence upper bound for $p$, which is higher than the nominal level $1-\alpha$. The series of relaxations makes this CI more and more conservative, but we will show that the upper bound still has similar scale with $\hat{p}_u^{CLT}$ and $\hat{p}_u^{Wilson}$. Namely, we can derive that $|\hat{p}_u^{BE}-\hat{p}_u^{CLT}|$ is bounded by order $1/n$. In other words, though $\hat{p}_u^{CLT}$ has undesirable coverage probability in the rare-event setting, it is not ``too far'' from a valid upper bound. The following theorem states this result.
\begin{theorem}[Half-Width of B-E CI under Standard Setting]
Assume that $p<1/2$. $\hat{p}_u^{BE}$ is as defined in Theorem \ref{thm:BE_bound}. Then there is a constant $C_0$ which does not depend on $p$ and $\hat p$ such that 
$$
|\hat{p}_u^{BE}-\hat{p}_u^{CLT}|\leq C_0/n.
$$
\label{thm:BE_error}
\end{theorem}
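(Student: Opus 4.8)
The plan is to compare $\hat p_u^{BE}$, which is the Wilson-type upper bound at nominal level $1-\lambda\alpha$ (i.e.\ with $z_{\lambda\alpha/2}$ in place of $z_{1-\alpha/2}$), against $\hat p_u^{CLT}$, by decomposing the difference into two pieces: first the difference between the Wilson bound at level $z_{\lambda\alpha/2}$ and the CLT bound at level $z_{\lambda\alpha/2}$, and then the difference between the CLT bound at level $z_{\lambda\alpha/2}$ and the CLT bound at level $z_{1-\alpha/2}$. For the first piece, Theorem~\ref{thm:Wilson_error} (applied with $z_{\lambda\alpha/2}$ in the role of $z_{1-\alpha/2}$) already gives a bound of order $z_{\lambda\alpha/2}^2/n + z_{\lambda\alpha/2}^3/n^{3/2}$, so everything reduces to showing that $z_{\lambda\alpha/2}$ stays bounded by a constant independent of $p,\hat p,n$ (once $n>N_0$), and that $|z_{\lambda\alpha/2}\sqrt{\hat p(1-\hat p)/n} - z_{1-\alpha/2}\sqrt{\hat p(1-\hat p)/n}|$ is $O(1/n)$.

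The key step is therefore to control $z_{\lambda\alpha/2}$. Recall $\lambda = 1 - 2\tilde C/(\sqrt n\,\alpha)$ with $\tilde C = (C/\sqrt{\hat p(1-\hat p)}) \wedge (u\sqrt n\,\alpha/2)$, so $\lambda\alpha/2 = \alpha/2 - \tilde C/\sqrt n$. Since $\tilde C \le u\sqrt n\,\alpha/2$, we get $\lambda\alpha/2 \ge (1-u)\alpha/2 > 0$, hence $z_{\lambda\alpha/2} \le z_{(1-u)\alpha/2}$, a fixed constant. On the other side $\lambda\alpha/2 \le \alpha/2$, so $z_{\lambda\alpha/2}\ge z_{1-\alpha/2}>0$; combined, $z_{\lambda\alpha/2}\in[z_{1-\alpha/2},\,z_{(1-u)\alpha/2}]$ uniformly. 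Next I would bound $z_{\lambda\alpha/2}-z_{1-\alpha/2}$: writing $\bar\Phi(z_{\lambda\alpha/2}) = \lambda\alpha/2 = \alpha/2 - \tilde C/\sqrt n$ and $\bar\Phi(z_{1-\alpha/2})=\alpha/2$, the mean value theorem gives $z_{\lambda\alpha/2}-z_{1-\alpha/2} = (\tilde C/\sqrt n)/\varphi(\xi)$ for some $\xi$ between the two quantiles, and since both quantiles lie in the fixed compact interval above, $\varphi(\xi)$ is bounded below by a positive constant; thus $z_{\lambda\alpha/2}-z_{1-\alpha/2} \le c_1 \tilde C/\sqrt n \le c_1 u\alpha/2$, and more usefully $z_{\lambda\alpha/2}-z_{1-\alpha/2} = O(\tilde C/\sqrt n)$.

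Then I assemble the estimate. Write $\hat p_u^{CLT}(z) = \hat p + z\sqrt{\hat p(1-\hat p)/n}$ for a generic multiplier $z$; then
\begin{align*}
|\hat p_u^{BE} - \hat p_u^{CLT}| &\le |\hat p_u^{BE} - \hat p_u^{CLT}(z_{\lambda\alpha/2})| + |\hat p_u^{CLT}(z_{\lambda\alpha/2}) - \hat p_u^{CLT}(z_{1-\alpha/2})|\\
&\le \frac{z_{\lambda\alpha/2}^2}{n} + \frac{z_{\lambda\alpha/2}^3}{2n^{3/2}} + (z_{\lambda\alpha/2}-z_{1-\alpha/2})\sqrt{\frac{\hat p(1-\hat p)}{n}}.
\end{align*}
The first two terms are $O(1/n)$ because $z_{\lambda\alpha/2}\le z_{(1-u)\alpha/2}$ is constant. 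For the last term, use $z_{\lambda\alpha/2}-z_{1-\alpha/2} \le c_1\tilde C/\sqrt n$ together with $\tilde C\le C/\sqrt{\hat p(1-\hat p)}$, so that $(z_{\lambda\alpha/2}-z_{1-\alpha/2})\sqrt{\hat p(1-\hat p)/n} \le c_1 (C/\sqrt{\hat p(1-\hat p)})\sqrt{\hat p(1-\hat p)}/n = c_1 C/n$ when $\tilde C = C/\sqrt{\hat p(1-\hat p)}$; and when instead $\tilde C = u\sqrt n\,\alpha/2$ (the $\hat p=0,1$ case or when $\hat p(1-\hat p)$ is not too small), the last term is $\le c_1(u\sqrt n\,\alpha/2)\sqrt{\hat p(1-\hat p)}/n \cdot (1/\sqrt n)$... more carefully $(z_{\lambda\alpha/2}-z_{1-\alpha/2})\sqrt{\hat p(1-\hat p)/n} \le c_1\tilde C/\sqrt n\cdot\sqrt{\hat p(1-\hat p)/n}$; since $\tilde C = C/\sqrt{\hat p(1-\hat p)}\wedge(\cdot)\le C/\sqrt{\hat p(1-\hat p)}$ always, this is $\le c_1 C/n$ in every case. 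Taking $C_0$ to absorb $c_1 C$, $z_{(1-u)\alpha/2}^2$, and $z_{(1-u)\alpha/2}^3/2$ (using $n>N_0\ge1$ to fold the $n^{-3/2}$ term into $n^{-1}$) gives $|\hat p_u^{BE}-\hat p_u^{CLT}|\le C_0/n$ with $C_0$ depending only on $\alpha,u,C$.

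The main obstacle I anticipate is the bookkeeping around the two regimes of $\tilde C$ and, in particular, making sure the lower bound on $\varphi(\xi)$ (equivalently the Lipschitz constant of $\bar\Phi^{-1}$ on the relevant range) is genuinely uniform in $p,\hat p$ — this hinges precisely on the fact, established above, that $z_{\lambda\alpha/2}$ never escapes the fixed interval $[z_{1-\alpha/2}, z_{(1-u)\alpha/2}]$, which in turn is exactly what the constraint $\tilde C\le u\sqrt n\,\alpha/2$ in the definition of $\lambda$ buys us. Everything else is a short computation using Theorem~\ref{thm:Wilson_error} as a black box.
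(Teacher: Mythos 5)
Your proposal is correct, but it takes a genuinely different route from the paper's. The paper works directly with the closed-form difference $\hat p_u^{BE}-\hat p_u^{CLT}$ written as a single fraction, bounds the denominator below by $2n/z_{(1-u)\alpha/2}^2$, and bounds the numerator by a constant via a Taylor expansion of $1/z_{\lambda\alpha/2}^2$ in $1-\lambda$, which requires controlling a remainder term $r(\lambda)$ and showing $|\sqrt{n\hat p(1-\hat p)}\,r(\lambda)|$ stays bounded. You instead insert the intermediate quantity $\hat p+z\sqrt{\hat p(1-\hat p)/n}$ at the adjusted level, reuse Theorem \ref{thm:Wilson_error} with a generic multiplier (legitimate, since its proof uses nothing about the specific value $z_{1-\alpha/2}$ beyond positivity) for the Wilson-versus-CLT part, and control the level change by the mean value theorem applied to $\bar\Phi$. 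Both arguments ultimately hinge on the same two facts: $\lambda\ge 1-u$ keeps the relevant quantile inside a fixed compact interval, so the quantile map is uniformly Lipschitz there (equivalently, the normal density is bounded below), and the cap in the definition of $\tilde C$ gives $\tilde C\sqrt{\hat p(1-\hat p)}\le C$, so a level shift of size $\tilde C/\sqrt n$ moves the CLT bound by only $O(1/n)$. Your route is arguably cleaner and more elementary: it avoids the second-order remainder bookkeeping and makes transparent why the constraint $\tilde C\le u\sqrt n\,\alpha/2$ is exactly what is needed for uniformity in $p$ and $\hat p$; the resulting constant $C_0$ depends, as in the paper, only on $\alpha$, $u$ and the B-E constant $C$. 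Two cosmetic points: with the paper's convention $z_q=\Phi^{-1}(q)$, the quantity $z_{\lambda\alpha/2}$ is negative, so inequalities such as $z_{\lambda\alpha/2}\ge z_{1-\alpha/2}$ should be phrased for the magnitude $z_{1-\lambda\alpha/2}$, which is what your argument actually uses since only squares and differences of magnitudes enter; and the degenerate cases $\hat p\in\{0,1\}$ are handled automatically because the factor $\sqrt{\hat p(1-\hat p)}$ multiplying the quantile difference vanishes there.
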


Note that the bound in Theorem \ref{thm:BE_error} can be rephrased as $    |\hat{p}_u^{BE}-\hat{p}_u^{CLT}|\leq C_0\hat p/\hat{s}$. In other words, $\hat p_u^{BE}$ differ from $\hat p_u^{CLT}$ by a higher order than the half-width of the CLT CI in terms of $\hat{s}$, while all quantities scale with $\hat p$ in a similar manner. Compared to Theorem \ref{thm:chernoff_error}, we see in Theorem \ref{thm:BE_error} that $\hat p_u^{BE}$ is substantially tighter than $\hat p_u^{Chernoff}$ when $\hat{s}$ increases, although due to the implicit constant $C_0$ it may not be the case for small $\hat{s}$.

\section{Developments under Targeted Stopping}\label{sec:stopping}
We now present our results for the targeted stopping setting, following the roadmap for the standard setting presented earlier. Namely, we first present the construction of Chernoff's CI and the B-E CI (Section \ref{sec:stopping_develop}), followed by the analyses of half widths for all CIs (Section \ref{sec:stopping_width}).

\subsection{Derivation of New Confidence Intervals}\label{sec:stopping_develop}

To construct the new CIs under the targeted stopping setting, we again relax the confidence region \eqref{eqn:pvalueN} via Chernoff's inequality and the B-E theorem. Nevertheless, now we need to deal with the distribution of $N$ instead of $\hat p$ as in \eqref{eqn:pvalue}. Hence, as we will show below, the specific derivations of applying Chernoff's inequality and the B-E theorem differ from the standard setting.

\paragraph{Chernoff's CI.}

First, we propose Chernoff's CI similar to the one in the standard setting. By Markov's inequality, we get that 
\begin{equation*}
    \mathbb{P}(N\geq x)\leq e^{-tx}\mathbb{E}(e^{tN})=e^{-tx}\left(\frac{pe^t}{1-(1-p)e^t}\right)^{n_0},0<t<-\log(1-p).
\end{equation*}
Then for $x>n_0/p$,
\begin{equation*}
    \mathbb{P}(N\geq x)\leq \min_{0<t<-\log(1-p)}e^{-tx}\left(\frac{pe^t}{1-(1-p)e^t}\right)^{n_0}=\frac{(1-p)^{x-n_0}x^xp^{n_0}}{(x-n_0)^{x-n_0}n_0^{n_0}}.
\end{equation*}
Similarly, 
\begin{equation*}
    \mathbb{P}(N\leq x)\leq e^{tx}\mathbb{E}(e^{-tN})=e^{tx}\left(\frac{pe^{-t}}{1-(1-p)e^{-t}}\right)^{n_0},t>0
\end{equation*}
and thus for $0<x<n_0/p$,
\begin{equation*}
    \mathbb{P}(N\leq x)\leq\min_{t>0}e^{tx}\left(\frac{pe^{-t}}{1-(1-p)e^{-t}}\right)^{n_0}=\frac{(1-p)^{x-n_0}x^xp^{n_0}}{(x-n_0)^{x-n_0}n_0^{n_0}}.
\end{equation*}
Therefore, $F_N(N)\geq\alpha/2,F_{N-}(N)\leq1-\alpha/2$ implies that either 
\begin{equation*}
    N\geq n_0/p\text{ and }\frac{(1-p)^{N-n_0}N^Np^{n_0}}{(N-n_0)^{N-n_0}n_0^{n_0}}\geq \alpha/2
\end{equation*}
or
\begin{equation*}
   N\leq n_0/p\text{ and } \frac{(1-p)^{N-n_0}N^Np^{n_0}}{(N-n_0)^{N-n_0}n_0^{n_0}}\geq \alpha/2.
\end{equation*}
Finally we get that 
\begin{equation*}
    \left\{0<p<1:\frac{(1-p)^{N-n_0}N^Np^{n_0}}{(N-n_0)^{N-n_0}n_0^{n_0}}\geq \alpha/2\right\}
\end{equation*}
is a valid $(1-\alpha)$-level confidence region for $p$ under the targeted stopping setting. After simplification, we summarize our result with the following theorem:
\begin{theorem}[Validity of Chernoff's CI under Targeted Stopping]
Suppose that we keep sampling from $Bernoulli(p)$ until we get $n_0$ successes and the sample size is denoted by $N$. Then
\begin{equation}
    \left\{0<p<1:p^{n_0}(1-p)^{N-n_0}\geq \frac{\alpha}{2}\left(\frac{n_0}{N}\right)^{n_0}\left(1-\frac{n_0}{N}\right)^{N-n_0}\right\}
    \label{eqn:chernoffN}
\end{equation}
is a valid $(1-\alpha)$-level confidence region for $p$.
\end{theorem}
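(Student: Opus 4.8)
The plan is to reuse the two-stage template already used for the standard-setting Chernoff CI: first invoke the fact, established in the paragraph preceding Theorem statement, that the set \eqref{eqn:pvalueN} is a valid $(1-\alpha)$-level confidence region; then show that the region \eqref{eqn:chernoffN} \emph{contains} \eqref{eqn:pvalueN}, so it inherits validity, since enlarging a valid confidence region (pointwise in the data) can only raise coverage.

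First I would record the moment generating function of $N$. Writing $N=\sum_{i=1}^{n_0}N_i$ with $N_i\stackrel{i.i.d.}{\sim}Geometric(p)$, equivalently $N-n_0\sim NB(n_0,p)$, one has $E(e^{tN})=\bigl(pe^t/(1-(1-p)e^t)\bigr)^{n_0}$ for $t<-\log(1-p)$ and $E(e^{-tN})=\bigl(pe^{-t}/(1-(1-p)e^{-t})\bigr)^{n_0}$ for $t>0$. Markov's inequality applied to $e^{tN}$ and to $e^{-tN}$ then yields, for the admissible $t$, the two exponential tail bounds displayed just above the theorem. Minimizing the exponent over $t$ (the exponent is convex, so a stationary point is the minimizer) gives the first-order condition $1-(1-p)e^{\pm t}=n_0/x$; the resulting optimizer falls strictly inside the admissible interval exactly when $x>n_0/p$ for the upper tail and when $x<n_0/p$ for the lower tail, and substituting it back collapses \emph{both} bounds to the single closed form $(1-p)^{x-n_0}x^xp^{n_0}/\bigl((x-n_0)^{x-n_0}n_0^{n_0}\bigr)$. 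That the two tails produce the identical expression is what makes the final region clean.

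With $x$ set equal to the observed $N$, I would finish by a case split. Note that $1-F_{N-}(x)=P(N\geq x)$ and $F_N(x)=P(N\leq x)$ are deterministic functions of $x$. Suppose $p$ lies in \eqref{eqn:pvalueN}, so $F_N(N)\geq\alpha/2$ and $1-F_{N-}(N)\geq\alpha/2$. If $N\geq n_0/p$, the upper-tail bound gives $\alpha/2\leq 1-F_{N-}(N)\leq(1-p)^{N-n_0}N^Np^{n_0}/\bigl((N-n_0)^{N-n_0}n_0^{n_0}\bigr)$; if $N\leq n_0/p$, the lower-tail bound gives $\alpha/2\leq F_N(N)\leq$ the same quantity. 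Either way, rearranging $(1-p)^{N-n_0}N^Np^{n_0}/\bigl((N-n_0)^{N-n_0}n_0^{n_0}\bigr)\geq\alpha/2$ is exactly the defining inequality of \eqref{eqn:chernoffN}, so \eqref{eqn:pvalueN}$\subseteq$\eqref{eqn:chernoffN} and validity follows.

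The only nonroutine point, and hence the main obstacle, is the Chernoff optimization: one must verify that the stationary $t$ genuinely lies in the \emph{open} admissible interval rather than on its boundary $t=-\log(1-p)$ (this is precisely where the condition comparing $x$ with $n_0/p$ enters, and why the upper and lower tails must be treated separately), and that the degenerate case $N=n_0$ is consistent under the convention $0^0:=1$, in which the bound reduces to the exact identity $P(N=n_0)=p^{n_0}$. Once these are checked, the rest is bookkeeping parallel to the standard-setting Chernoff CI.
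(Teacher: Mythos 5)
Your proposal is correct and follows essentially the same route as the paper: it starts from the validity of the region \eqref{eqn:pvalueN}, applies Markov's inequality to $e^{\pm tN}$ using the negative-binomial moment generating function, optimizes over $t$ (with the same first-order condition and the same closed-form bound for both tails under the conditions $N\geq n_0/p$ and $N\leq n_0/p$), and concludes that \eqref{eqn:pvalueN} is contained in \eqref{eqn:chernoffN}, which inherits the coverage guarantee. The extra care you take with the boundary of the admissible $t$-interval and the degenerate case $N=n_0$ is consistent with, and slightly more explicit than, the paper's argument.
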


It is easy to verify that $f(p)=p^{n_0}(1-p)^{N-n_0}-\frac{\alpha}{2}(n_0/N)^{n_0}(1-n_0/N)^{N-n_0}$ is increasing in $[0,n_0/N]$ and decreasing in $[n_0/N,1]$. Moreover, we observe that $f(0)=f(1)<0$ and $f(n_0/N)>0$. Thus \eqref{eqn:chernoffN} is actually an interval and numerically, we could use the bisection method to compute the bounds. Nevertheless, this CI is not as easy to study analytically as under the standard setting, so we will not include its half-width result in the following section.

\paragraph{B-E CI.}

Now we apply the B-E theorem again. We still assume that $p<1/2$ is known as a priori. By the theorem, we get that
\begin{align}
    \left|\mathbb{P}\left(\frac{N-n_0/p}{\sqrt{n_0(1-p)/p^2}}\leq x\right)-\Phi(x)\right|&\leq \frac{C\rho_N}{\sigma_N^3\sqrt{n_0}},
    \label{eqn:BE_N1}\\
    \left|\mathbb{P}\left(\frac{n_0/p-N}{\sqrt{n_0(1-p)/p^2}}\leq x\right)-\Phi(x)\right|&\leq \frac{C\rho_N}{\sigma_N^3\sqrt{n_0}},
    \label{eqn:BE_N2}
\end{align}
where $\sigma_N^2=\mathbb{E}(N_i-1/p)^2=(1-p)/p^2$ and $\rho_N=E|N_i-1/p|^3$.

We need to deal with $\rho_N$ first. In fact, we know that 
$$
p^3\rho_N=p^3E\left|N_i-\frac{1}{p}\right|^3=E|pN_i-1|^3\le 1+3p\mathbb{E}(N_i)+3p^2\mathbb{E}(N_i^2)+p^3\mathbb{E}(N_i^3).
$$
Since $N_i\sim Geometric(p)$, we know that 
$$
\mathbb{E}(N_i)=\frac{1}{p},\mathbb{E}(N_i^2)=\frac{2-p}{p^2},\mathbb{E}(N_i^3)=\frac{p^2-6p+6}{p^3},
$$
and thus 
$$
p^3\rho_N\le p^2-6p+6+3(2-p)+3+1=p^2-12p+16\le 16.
$$
Hence, 
$$
\frac{C\rho_N}{\sigma_N^3\sqrt{n_0}}=\frac{Cp^3\rho_N}{(1-p)^{\frac32}\sqrt{n_0}}\le \frac{C'}{(1-p)^{\frac32}\sqrt{n_0}}
$$
where $C'=16C$ is an absolute constant.

By setting $x=\sqrt{\frac{p^2}{n_0(1-p)}}\left(N-\frac{n_0}{p}\right)$ in \eqref{eqn:BE_N1} and $x=\sqrt{\frac{p^2}{n_0(1-p)}}\left(\frac{n_0}{p}-N\right)$ in \eqref{eqn:BE_N2}, we get that 
\begin{align*}
    \left|F_N(N)-\Phi\left(\frac{Np-n_0}{\sqrt{n_0(1-p)}}\right)\right|&\leq \frac{C'}{\sqrt{n_0(1-p)^3}};\\
    \left|1-F_{N-}(N)-\Phi\left(\frac{n_0-Np}{\sqrt{n_0(1-p)}}\right)\right|&\leq \frac{C'}{\sqrt{n_0(1-p)^3}}.
\end{align*}
Hence $F_N(N)\geq\alpha/2,F_{N-}(N)\leq 1-\alpha/2$ implies that either
\begin{equation*}
    p\geq n_0/N\text{ and }\Phi\left(\frac{n_0-Np}{\sqrt{n_0(1-p)}}\right)+\frac{C'}{\sqrt{n_0(1-p)^3}}\geq\alpha/2
\end{equation*}
or 
\begin{equation*}
    p\leq n_0/N\text{ and }\Phi\left(\frac{Np-n_0}{\sqrt{n_0(1-p)}}\right)+\frac{C'}{\sqrt{n_0(1-p)^3}}\geq\alpha/2.
\end{equation*}
Thus, we develop a valid confidence region under this particular setting, which is similar to the one in Section \ref{sec:standard}:
\begin{theorem}[Validity of B-E CI under Targeted Stopping]
Suppose that we keep sampling from $Bernoulli(p)$ until we get $n_0$ successes and the sample size is denoted by $N$. Assume that $p<1/2$. Then 
\begin{align}
    &\left\{0< p\leq\frac{n_0}{N}\wedge\frac12:\Phi\left(\frac{Np-n_0}{\sqrt{n_0(1-p)}}\right)+\frac{C'}{\sqrt{n_0(1-p)^3}}\geq\frac{\alpha}{2}\right\}\nonumber\\
    \cup&\left\{\frac{n_0}{N}\leq p<\frac{1}{2}:\Phi\left(\frac{n_0-Np}{\sqrt{n_0(1-p)}}\right)+\frac{C'}{\sqrt{n_0(1-p)^3}}\geq\frac{\alpha}{2}\right\}
\label{eqn:BE_regionN}
\end{align}
is a valid $(1-\alpha)$-level confidence region for $p$. Here, $C'$ is a universal constant. In particular, one may pick $C'=16C$ where $C$ is the constant in the B-E theorem.
\label{thm:BE_regionN}
\end{theorem}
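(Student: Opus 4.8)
The plan is to relax the exact valid confidence region \eqref{eqn:pvalueN} --- already shown to cover $p$ with probability at least $1-\alpha$ --- by replacing the negative-binomial distribution functions $F_N$ and $F_{N-}$ with Gaussian approximations whose error is controlled uniformly in $p$ by the Berry-Esseen theorem. This mirrors the standard-setting derivation, except that here the central-limit behavior is that of the random sample size $N=\sum_{i=1}^{n_0}N_i$, a sum of $n_0$ i.i.d.\ $Geometric(p)$ summands, rather than of $\hat p$.

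First I would assemble the first three moments of a single summand, $E(N_i)=1/p$, $E(N_i^2)=(2-p)/p^2$, and $E(N_i^3)=(p^2-6p+6)/p^3$, which give $\sigma_N^2=E(N_i-1/p)^2=(1-p)/p^2$. The one genuinely delicate ingredient is a clean, $p$-uniform bound on $\rho_N=E|N_i-1/p|^3$: passing to $p^3\rho_N=E|pN_i-1|^3$ and using $|pN_i-1|\le 1+pN_i$ (valid since $N_i\ge 1$) yields $p^3\rho_N\le 1+3pE(N_i)+3p^2E(N_i^2)+p^3E(N_i^3)=p^2-12p+16\le 16$ for all $p\in(0,1)$. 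Hence $C\rho_N/(\sigma_N^3\sqrt{n_0})=Cp^3\rho_N/((1-p)^{3/2}\sqrt{n_0})\le C'/\sqrt{n_0(1-p)^3}$ with $C'=16C$.

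Next I would apply the Berry-Esseen theorem to the standardized statistic $(N-n_0/p)/\sqrt{n_0(1-p)/p^2}$ and to its negative, and then evaluate the resulting CDF inequalities at $x=\pm\sqrt{p^2/(n_0(1-p))}\,(N-n_0/p)$, which rescales the denominator to $\sqrt{n_0(1-p)}$ and gives
\begin{align*}
\left|F_N(N)-\Phi\!\left(\frac{Np-n_0}{\sqrt{n_0(1-p)}}\right)\right|\le\frac{C'}{\sqrt{n_0(1-p)^3}},\qquad
\left|1-F_{N-}(N)-\Phi\!\left(\frac{n_0-Np}{\sqrt{n_0(1-p)}}\right)\right|\le\frac{C'}{\sqrt{n_0(1-p)^3}}.
\end{align*}
From the membership conditions $F_N(N)\ge\alpha/2$ and $F_{N-}(N)\le 1-\alpha/2$ that define \eqref{eqn:pvalueN}, these two inequalities force $\Phi((Np-n_0)/\sqrt{n_0(1-p)})+C'/\sqrt{n_0(1-p)^3}\ge\alpha/2$ and $\Phi((n_0-Np)/\sqrt{n_0(1-p)})+C'/\sqrt{n_0(1-p)^3}\ge\alpha/2$.

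Finally I would split on the sign of $n_0/N-p$: when $p\le n_0/N$ the first of these inequalities is the binding one, when $p\ge n_0/N$ it is the second, and intersecting the union of the two resulting half-conditions with the a priori constraint $p<1/2$ produces exactly the set \eqref{eqn:BE_regionN}. Because this set contains the intersection of \eqref{eqn:pvalueN} with $\{p<1/2\}$, which under the standing hypothesis $p<1/2$ equals \eqref{eqn:pvalueN} itself and hence has coverage at least $1-\alpha$, the set \eqref{eqn:BE_regionN} is a valid $(1-\alpha)$-level confidence region. The only step that is not a mechanical substitution is the moment estimate for $\rho_N$ that lands the uniform constant $C'=16C$; everything else is plugging the Berry-Esseen bound into the already-established exact region.
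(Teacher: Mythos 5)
Your proposal is correct and follows essentially the same route as the paper's own derivation: the same geometric-moment computation bounding $p^3\rho_N\le 16$ to get $C'=16C$, the same substitution of $x=\pm(Np-n_0)/\sqrt{n_0(1-p)}$ into the Berry--Esseen bounds for $F_N(N)$ and $1-F_{N-}(N)$, and the same relaxation of the exact region \eqref{eqn:pvalueN} split according to $p\lessgtr n_0/N$ and intersected with $\{p<1/2\}$. The only cosmetic difference is that you note both approximate inequalities hold simultaneously on the coverage event, whereas the paper states them in either/or form; the resulting region and validity argument are identical.
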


Note that when $n_0$ is not large enough, we have $C'/\sqrt{n_0(1-p)^3}\geq\alpha/2$ anyway. That is to say, this confidence region is not really practical. However, it could still provide an insight on how close are the CLT or Wilson's intervals to a valid one.


\subsection{Analyses of Half-Widths}\label{sec:stopping_width}

As mentioned before, under the targeted stopping setting, Chernoff's CI is no longer easy to analyze. Thus, in this subsection, we will only cover the analyses for the CLT, Wilson's and B-E CIs. For the first two, as the formulas of the CIs are the same as the standard setting, we simply present the results here. For the B-E CI, the main idea of the derivations is similar to the standard setting, but there are differences in the technical details. In particular, now we are able to get a non-trivial lower bound.

\paragraph{CLT CI and Wilson's CI.}

Under the targeted stopping setting, the formulas of the CLT CI and Wilson's CI are the same as under the standard setting with $\hat p=n_0/N$. Thus, clearly we still have that $\hat{p}_{u,n_0}^{CLT}-\hat{p}=\hat{p}-\hat{p}_{l,n_0}^{CLT}=\Theta(\hat{p}/\sqrt{n_0})=\Theta(\sqrt{n_0}/N)$ and that $|\hat{p}_{u,n_0}^{Wilson}-\hat{p}_{u,n_0}^{CLT}|=O(1/N)=O(\hat p/n_0),|\hat{p}_{l,n_0}^{Wilson}-\hat{p}_{l,n_0}^{CLT}|=O(1/N)=O(\hat p/n_0)$. 

\paragraph{B-E CI.}

Similar to Section \ref{sec:standard}, the confidence region \eqref{eqn:BE_regionN} could be further relaxed. However, unlike in the standard setting, now the error term $C'/\sqrt{n_0(1-p)^3}$ could be well controlled for tiny $p$ and as a result, we are able to get a non-trivial lower bound in this case. More concretely, for any $0<\lambda<1$, we have that 
\begin{equation*}
    0<p<1,\frac{C'}{\sqrt{n_0(1-p)^3}}\geq(1-\lambda)\alpha/2\Rightarrow p\geq 1-\left(\frac{4C'}{n_0(1-\lambda)^2\alpha^2}\right)^{1/3}.
\end{equation*}
If we could find $0<\lambda<1$ such that 
\begin{equation*}
    \left(\frac{4C'^2}{n_0(1-\lambda)^2\alpha^2}\right)^{1/3}=\frac12,
\end{equation*}
then for any $0<p<1/2$, we have that 
\begin{equation*}
    \frac{C'}{\sqrt{n_0(1-p)^3}}\leq(1-\lambda)\alpha/2.
\end{equation*}
As a result, any $p$ in \eqref{eqn:BE_regionN} must satisfy that 
\begin{equation*}
    0<p\leq \frac{n_0}{N}\wedge\frac12,\Phi\left(\frac{Np-n_0}{\sqrt{n_0(1-p)}}\right)\geq\lambda\alpha/2 \text{ or }\frac{n_0}{N}\leq p<\frac{1}{2},\Phi\left(\frac{n_0-Np}{\sqrt{n_0(1-p)}}\right)\geq\lambda\alpha/2.
\end{equation*}
After simplification, we get that
\begin{align*}
    0<p\leq \frac{n_0}{N},\Phi\left(\frac{Np-n_0}{\sqrt{n_0(1-p)}}\right)\geq\lambda\alpha/2&\Rightarrow p\geq\frac{2Nn_0-z_{\lambda\alpha/2}^2n_0-\sqrt{4z_{\lambda\alpha/2}^2Nn_0(N-n_0)+z_{\lambda\alpha/2}^4n_0^2}}{2N^2},\\
    \frac{n_0}{N}\leq p<1,\Phi\left(\frac{n_0-Np}{\sqrt{n_0(1-p)}}\right)\geq\lambda\alpha/2&\Rightarrow p\leq \frac{2Nn_0-z_{\lambda\alpha/2}^2n_0+\sqrt{4z_{\lambda\alpha/2}^2Nn_0(N-n_0)+z_{\lambda\alpha/2}^4n_0^2}}{2N^2}.
\end{align*}
Thus \eqref{eqn:BE_regionN} could be relaxed into a valid $(1-\alpha)$-level CI, which is defined more rigorously in the following theorem:
\begin{theorem}[Relaxed B-E CI under Targeted Stopping]
Suppose that we keep sampling from $Bernoulli(p)$ until we get $n_0$ successes and the sample size is denoted by $N$. Assume that $p<1/2$. Let 
\begin{equation*}
    \lambda=1-\frac{4\sqrt{2}C'}{\sqrt{n_0}\alpha}.
\end{equation*}
Then for any $n_0>32C'^2/\alpha^2$, we have that 
\begin{align*}
    \hat{p}_{u,n_0}^{BE}&=\frac{2Nn_0-z_{\lambda\alpha/2}^2n_0+\sqrt{4z_{\lambda\alpha/2}^2Nn_0(N-n_0)+z_{\lambda\alpha/2}^4n_0^2}}{2N^2},\\
    \hat{p}_{l,n_0}^{BE}&=\frac{2Nn_0-z_{\lambda\alpha/2}^2n_0-\sqrt{4z_{\lambda\alpha/2}^2Nn_0(N-n_0)+z_{\lambda\alpha/2}^4n_0^2}}{2N^2}
\end{align*}
is a valid $(1-\alpha)$-level CI for $p$. Here, $C'$ is the same as in Theorem \ref{thm:BE_regionN}.
\label{thm:BE_boundN}
\end{theorem}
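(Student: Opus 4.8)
The plan is to formalize, with an explicit choice of $\lambda$, the chain of relaxations sketched immediately above the theorem. The starting point is the valid $(1-\alpha)$-level confidence region \eqref{eqn:BE_regionN} guaranteed by Theorem \ref{thm:BE_regionN}. Since any superset of a valid confidence region is again a valid CI for $p$, it suffices to show that \eqref{eqn:BE_regionN} is contained in $[\hat{p}_{l,n_0}^{BE},\hat{p}_{u,n_0}^{BE}]$ whenever $n_0>32C'^2/\alpha^2$ and $\lambda$ is chosen as stated.

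First I would control the Berry--Esseen error term. Because $p<1/2$ forces $(1-p)^3>1/8$, we have $C'/\sqrt{n_0(1-p)^3}<2\sqrt2\,C'/\sqrt{n_0}$ for every admissible $p$. With $\lambda=1-4\sqrt2\,C'/(\sqrt{n_0}\alpha)$ one computes $(1-\lambda)\alpha/2=2\sqrt2\,C'/\sqrt{n_0}$, so $C'/\sqrt{n_0(1-p)^3}\le(1-\lambda)\alpha/2$ throughout; the hypothesis $n_0>32C'^2/\alpha^2$ is exactly what makes $0<\lambda<1$, hence $z_{\lambda\alpha/2}$ a well-defined (negative, since $\lambda\alpha/2<1/2$) quantile. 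Consequently, on each of the two pieces of \eqref{eqn:BE_regionN} the inequality $\Phi(\cdot)+C'/\sqrt{n_0(1-p)^3}\ge\alpha/2$ upgrades to $\Phi(\cdot)\ge\lambda\alpha/2$.

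Next I would invert the two resulting inequalities. On the first piece $p\le n_0/N$, so $(Np-n_0)/\sqrt{n_0(1-p)}\le 0$; rewriting $\Phi\big((Np-n_0)/\sqrt{n_0(1-p)}\big)\ge\lambda\alpha/2$ as $(n_0-Np)/\sqrt{n_0(1-p)}\le -z_{\lambda\alpha/2}$ (both sides nonnegative) and squaring gives the upward-opening quadratic $N^2p^2+(z_{\lambda\alpha/2}^2 n_0-2Nn_0)p+(n_0^2-z_{\lambda\alpha/2}^2 n_0)\le 0$. The same quadratic arises on the second piece $p\ge n_0/N$. One then checks that plugging $p=n_0/N$ into this quadratic yields $z_{\lambda\alpha/2}^2 n_0(n_0-N)/N\le 0$, so $n_0/N$ lies between the two roots; its discriminant works out to $4z_{\lambda\alpha/2}^2 Nn_0(N-n_0)+z_{\lambda\alpha/2}^4 n_0^2\ge 0$, and the two roots are precisely $\hat{p}_{l,n_0}^{BE}$ (the smaller) and $\hat{p}_{u,n_0}^{BE}$ (the larger) from the statement. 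Hence on the first piece the constraint reduces to $p\ge\hat{p}_{l,n_0}^{BE}$, on the second piece to $p\le\hat{p}_{u,n_0}^{BE}$, and, since $\hat{p}_{l,n_0}^{BE}\le n_0/N\le\hat{p}_{u,n_0}^{BE}$, the union of the two pieces is contained in $[\hat{p}_{l,n_0}^{BE},\hat{p}_{u,n_0}^{BE}]$, which finishes the proof.

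The only delicate points are bookkeeping ones: carrying the (negative) sign of $z_{\lambda\alpha/2}$ correctly so that the squaring step is legitimate on each piece, and matching the correct root of the common quadratic to the correct side of $n_0/N$. I expect the root computation and the sign check at $p=n_0/N$ to be the most error-prone part, but neither is conceptually hard.
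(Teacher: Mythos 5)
Your proposal is correct and follows essentially the same route as the paper: the paper's proof simply notes that $n_0>32C'^2/\alpha^2$ gives $0<\lambda<1$ and appeals to the pre-theorem derivation, which is exactly your chain — bounding $C'/\sqrt{n_0(1-p)^3}\le(1-\lambda)\alpha/2$ via $(1-p)^3>1/8$, upgrading to $\Phi(\cdot)\ge\lambda\alpha/2$, and inverting the resulting quadratic in $p$. Your explicit sign bookkeeping and root identification just fill in what the paper summarizes as ``after simplification.''
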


Finally, like in the standard setting, we will compare the difference between the B-E CI and the CLT CI.
\begin{theorem}[Half-Width of B-E CI under Targeted Stopping]
Assume that $p<1/2$. $\hat{p}_{u,n_0}^{BE}$ and $\hat{p}_{l,n_0}^{BE}$ are as defined in Theorem \ref{thm:BE_boundN}. Then there is a constant $C_0'$ which does not depend on $p$ and $N$ such that 
\begin{align*}
    \hat{p}_{u,n_0}^{BE}-\hat{p}_{u,n_0}^{CLT}&\leq C_0'/N,\\
    \hat{p}_{l,n_0}^{CLT}-\hat{p}_{l,n_0}^{BE}&\leq C_0'/N.
\end{align*}
\label{thm:BE_errorN}
\end{theorem}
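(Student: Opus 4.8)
The plan is purely algebraic: expand the two B-E endpoints from Theorem~\ref{thm:BE_boundN}, pull the factor $z_{\lambda\alpha/2}^{2}$ out of the radical, split the resulting square root with $\sqrt{a+b}\le\sqrt a+\sqrt b$, and then compare term by term with the CLT endpoints. First I would note that $z_{\lambda\alpha/2}$ enters the definition of $\hat p_{u,n_0}^{BE},\hat p_{l,n_0}^{BE}$ only through $z_{\lambda\alpha/2}^{2}$ (and $z_{\lambda\alpha/2}^{4}$ under the radical), so I may write $\zeta:=|z_{\lambda\alpha/2}|=z_{1-\lambda\alpha/2}$; since $\lambda=1-4\sqrt2\,C'/\sqrt{n_0}\in(0,1)$ under the standing hypothesis $n_0>32C'^{2}/\alpha^{2}$, $\zeta$ is a well-defined finite quantile, and $\zeta>z_{1-\alpha/2}$ because $\lambda<1$. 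Extracting $\zeta^{2}$ from $\sqrt{4z_{\lambda\alpha/2}^{2}Nn_0(N-n_0)+z_{\lambda\alpha/2}^{4}n_0^{2}}$ and dividing through by $2N^{2}$ gives
\begin{align*}
\hat p_{u,n_0}^{BE}&=\frac{n_0}{N}-\frac{\zeta^{2}n_0}{2N^{2}}+\frac{\zeta\sqrt{4Nn_0(N-n_0)+\zeta^{2}n_0^{2}}}{2N^{2}},\\
\hat p_{l,n_0}^{BE}&=\frac{n_0}{N}-\frac{\zeta^{2}n_0}{2N^{2}}-\frac{\zeta\sqrt{4Nn_0(N-n_0)+\zeta^{2}n_0^{2}}}{2N^{2}}.
\end{align*}

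Next I would apply $\sqrt{a+b}\le\sqrt a+\sqrt b$ with $a=4Nn_0(N-n_0)\ge0$, $b=\zeta^{2}n_0^{2}\ge0$, i.e.\ $\sqrt{4Nn_0(N-n_0)+\zeta^{2}n_0^{2}}\le2\sqrt{Nn_0(N-n_0)}+\zeta n_0$. Substituting this, the contribution of the $\zeta n_0$ piece is exactly $+\zeta^{2}n_0/(2N^{2})$, which cancels the $-\zeta^{2}n_0/(2N^{2})$ term in $\hat p_{u,n_0}^{BE}$, yielding $\hat p_{u,n_0}^{BE}\le \frac{n_0}{N}+\zeta\sqrt{n_0(N-n_0)/N^{3}}$; for the lower endpoint the two copies of $\zeta^{2}n_0/(2N^{2})$ add, yielding $\hat p_{l,n_0}^{BE}\ge\frac{n_0}{N}-\zeta\sqrt{n_0(N-n_0)/N^{3}}-\frac{\zeta^{2}n_0}{N^{2}}$. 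Subtracting the CLT endpoints $\hat p_{u,n_0}^{CLT}=\frac{n_0}{N}+z_{1-\alpha/2}\sqrt{n_0(N-n_0)/N^{3}}$ and $\hat p_{l,n_0}^{CLT}=\frac{n_0}{N}-z_{1-\alpha/2}\sqrt{n_0(N-n_0)/N^{3}}$ then produces
\begin{align*}
\hat p_{u,n_0}^{BE}-\hat p_{u,n_0}^{CLT}&\le(\zeta-z_{1-\alpha/2})\sqrt{\tfrac{n_0(N-n_0)}{N^{3}}},\\
\hat p_{l,n_0}^{CLT}-\hat p_{l,n_0}^{BE}&\le(\zeta-z_{1-\alpha/2})\sqrt{\tfrac{n_0(N-n_0)}{N^{3}}}+\tfrac{\zeta^{2}n_0}{N^{2}}.
\end{align*}

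Finally I would remove the remaining $N$-dependence by the crude bounds $\sqrt{n_0(N-n_0)/N^{3}}\le\sqrt{n_0}/N$ (as $N-n_0\le N$) and $1/N^{2}\le1/N$ (as $N\ge n_0\ge1$), concluding that both displayed differences are at most $C_0'/N$ with $C_0'=(\zeta-z_{1-\alpha/2})\sqrt{n_0}+\zeta^{2}n_0$, a quantity depending only on $n_0$, $\alpha$ and the universal constant $C'$ — not on $p$ or $N$ — exactly as claimed. There is essentially no hard step here: the only points requiring care are (i) recording that $\lambda\in(0,1)$, so $\zeta=z_{1-\lambda\alpha/2}$ is legitimate under $n_0>32C'^{2}/\alpha^{2}$, and (ii) observing that every estimate goes through with no case distinction even when $N=n_0$, since the argument never divides by $N-n_0$. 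A slightly sharper constant could be obtained by using $\sqrt{a+b}\le\sqrt a+b/(2\sqrt a)$ for $N>n_0$, but this is unnecessary for an $O(1/N)$ bound.
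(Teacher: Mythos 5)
Your algebra is sound as far as it goes: pulling $\zeta:=|z_{\lambda\alpha/2}|=z_{1-\lambda\alpha/2}$ out of the radical, splitting with $\sqrt{a+b}\le\sqrt a+\sqrt b$, and comparing with the CLT endpoints correctly gives $\hat p_{u,n_0}^{BE}-\hat p_{u,n_0}^{CLT}\le(\zeta-z_{1-\alpha/2})\sqrt{n_0(N-n_0)/N^3}$ and $\hat p_{l,n_0}^{CLT}-\hat p_{l,n_0}^{BE}\le(\zeta-z_{1-\alpha/2})\sqrt{n_0(N-n_0)/N^3}+\zeta^2 n_0/N^2$. The gap is in the last step: you absorb $\sqrt{n_0}$ and $n_0$ into the constant, ending with $C_0'=(\zeta-z_{1-\alpha/2})\sqrt{n_0}+\zeta^2 n_0$, which grows (linearly) in $n_0$. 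The theorem is intended — and proved in the paper — with a constant depending only on $\alpha$ and $C'$, uniformly in $n_0$; that uniformity is the whole content, since it is what makes the discrepancy $O(1/N)=O(\hat p/n_0)$ of higher order than the CLT half-width $\Theta(\sqrt{n_0}/N)$ (the comparison claimed in Table \ref{tab:summary} and Section \ref{sec:problem_conclusion}). With an $n_0$-growing constant the conclusion is nearly vacuous: one already has trivially $\hat p_{u,n_0}^{BE}-\hat p_{u,n_0}^{CLT}\le\hat p_{u,n_0}^{BE}\le(n_0+\zeta\sqrt{n_0})/N$, and your bound is no smaller than the half-width it is supposed to be dominated by.

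The missing ingredient is quantitative control of $\zeta-z_{1-\alpha/2}=z_{1-\lambda\alpha/2}-z_{1-\alpha/2}$. Since $1-\lambda=4\sqrt2\,C'/(\sqrt{n_0}\alpha)$ and $\lambda$ is bounded away from $0$ for admissible $n_0>32C'^2/\alpha^2$, an expansion (or local Lipschitz bound) of the normal quantile near $\alpha/2$ gives $z_{1-\lambda\alpha/2}-z_{1-\alpha/2}\le c(\alpha,C')/\sqrt{n_0}$; this Taylor-expansion step is exactly the heart of the paper's proofs of Theorems \ref{thm:BE_error} and \ref{thm:BE_errorN} (there phrased as $1-z_{1-\alpha/2}^2/z_{\lambda\alpha/2}^2=O(1/\sqrt{n_0})$), and it also shows $\zeta$ itself is bounded by a constant independent of $n_0$. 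Feeding this into your own intermediate inequality turns $(\zeta-z_{1-\alpha/2})\sqrt{n_0}/N$ into $c/N$, and for the extra lower-endpoint term you should use $N\ge n_0$ (not $N\ge1$) to get $\zeta^2 n_0/N^2\le\zeta^2/N$. With these two repairs your decomposition does yield the theorem with an $n_0$-free constant, and is arguably a cleaner route than the paper's, which instead rationalizes the difference and bounds the numerator and denominator separately using the same quantile expansion.
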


Therefore, under the targeted stopping setting, we could justify that the CLT CI is not too far from a valid one in terms of both upper bound and lower bound.

\section{Numerical Experiments}
\label{sec:numerical}
To close this paper, we perform some numerical experiments to visualize the differences among the CIs.
\subsection{Experiments under Standard Setting}
The true value is chosen as $p=1e-6$. For each of the settings $n=5/p, 10/p, 30/p, 50/p, 100/p$, we conduct 100,000 experimental repetitions and calculate the CIs with $\alpha=0.05$. Figure \ref{fig:standard} and Table \ref{tab:standard} respectively present 
the average values of the confidence upper and lower bounds from the 1000 repetitions
and the coverage probabilities of the five CIs covered in this paper. 
\begin{table}[h]
    \begin{minipage}{0.45\linewidth}
		\centering
		\includegraphics[width=\textwidth]{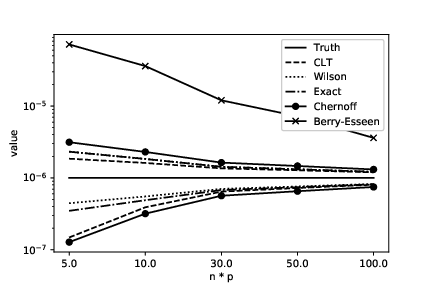}
		\captionof{figure}{Average values of the confidence upper and lower bounds under the standard setting.}
		\label{fig:standard}
	\end{minipage}
	\begin{minipage}{0.5\linewidth}
		\centering
		\begin{tabular}{rrrrrrrrr}
            \multicolumn{1}{l}{$np$}
          & \multicolumn{1}{l}{CLT} & \multicolumn{1}{l}{Wilson} & \multicolumn{1}{l}{Exact} & \multicolumn{1}{l}{Chernoff} & \multicolumn{1}{l}{BE} \\
    5     & 0.86990      &0.96252     &0.97992        &0.99930      &1.0   \\
    10    & 0.92601      &0.96252     &0.97542        &0.99767      &1.0 \\
    30    & 0.93100      &0.94476     &0.95531        &0.99560      &1.0  \\
    50    & 0.95098      &0.94596     &0.95451        &0.99659      &1.0   \\
    100   & 0.94501      &0.94920     &0.95485        &0.99575      &1.0  \\
    \end{tabular}
    \caption{Coverage probabilities of the CIs under the standard setting.}
    \label{tab:standard}
	\end{minipage}\hfill
\end{table}

As analyzed in Section \ref{sec:standard_width}, when $np$ is large, the CIs scale similarly, except that B-E fails to give a non-zero lower bound. While the CLT interval is closest to the truth in terms of the mean value of the upper bound, it is not reliable especially when $np$ is small. For instance, when $np=5$, its coverage probability is only $0.858$, which is much lower than the nominal confidence level $0.95$. Wilson's and the exact CIs are quite similar, especially for the upper bound. However, we notice that Wilson's bound sometimes fails to achieve the nominal confidence level, but the error in the coverage probability is acceptable to some extent. Chernoff's and the B-E CIs are conservative as expected, since they further relax the conservative confidence region \eqref{eqn:pvalue}. We would like to point out that though the B-E upper bound seems to be much larger than Chernoff's, it decays much faster as $np$ increases, which coincides with Theorems \ref{thm:chernoff_error} and \ref{thm:BE_error}.  

\subsection{Experiments under Targeted Stopping}
Now we consider the targeted stopping setting. We still set $p=1e-6$. For each of the settings $n_0=5,10,30,50,100$, we conduct 1000 experimental repetitions and calculate the CIs with $\alpha=0.05$. Figure \ref{fig:targeted} and Table \ref{tab:targeted} respectively present the average values of the confidence upper and lower bounds from the 100,000 repetitions and the coverage probabilities of the CIs. Note that we do not include the B-E CI since it is trivial due to small $n_0$ as aforementioned.
\begin{table}[h]
    \begin{minipage}{0.45\linewidth}
		\centering
		\includegraphics[width=\textwidth]{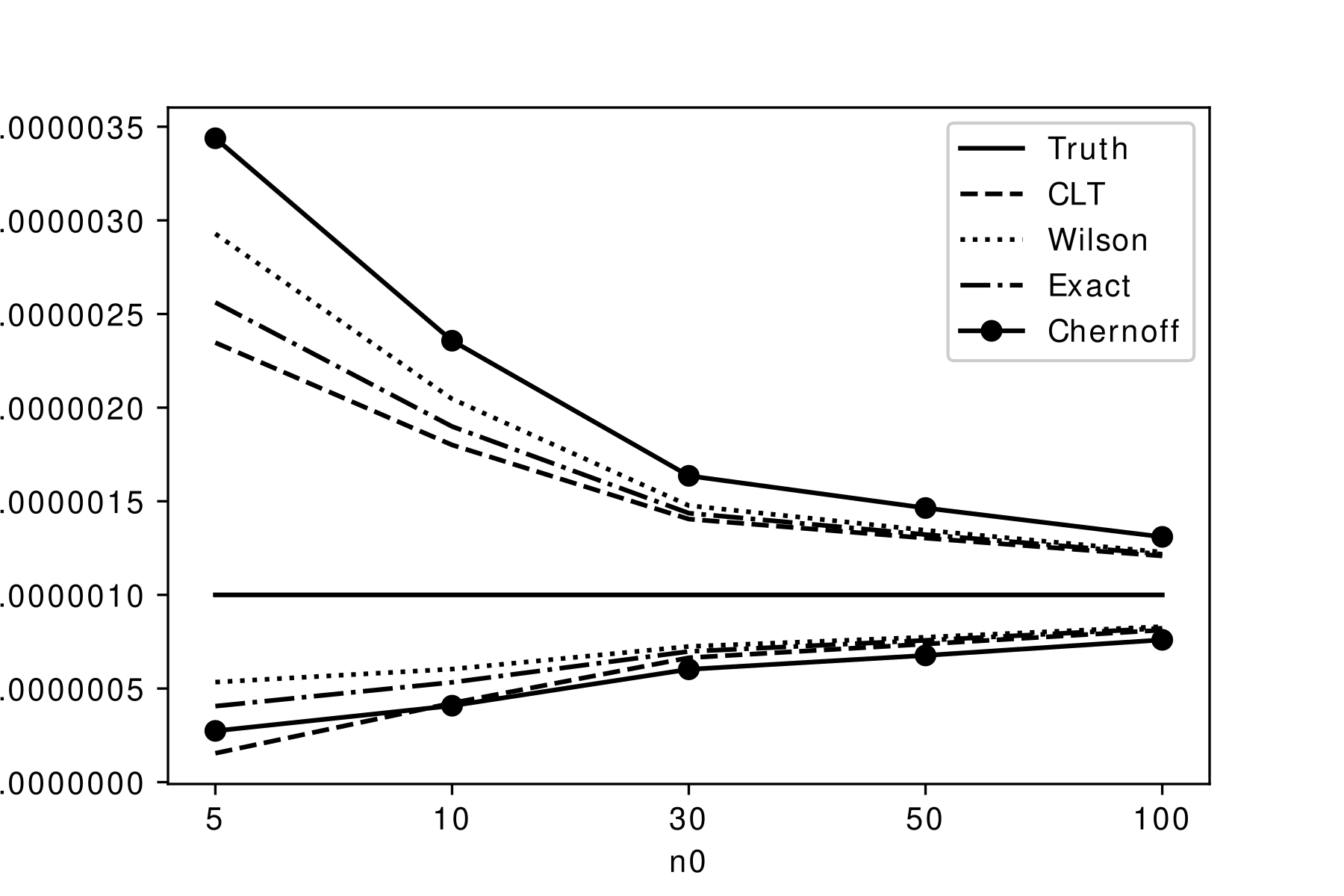}
		\captionof{figure}{Average values of the confidence upper and lower bounds under the targeted stopping setting.}
		\label{fig:targeted}
	\end{minipage}
	\begin{minipage}{0.5\linewidth}
		\centering
		\begin{tabular}{rrrrrrrrr}
            \multicolumn{1}{l}{$n_0$}
          & \multicolumn{1}{l}{CLT} & \multicolumn{1}{l}{Wilson} & \multicolumn{1}{l}{Exact} & \multicolumn{1}{l}{Chernoff}  \\
    5     & 0.95660      &0.92470     &0.94998        &0.99233   \\
    10    & 0.95511      &0.93714     &0.95015        &0.99300 \\
    30    & 0.95120      &0.94564     &0.94951        &0.99328  \\
    50    & 0.95169      &0.94750     &0.95025        &0.99276  \\
    100   & 0.95077      &0.94890     &0.95007        &0.99337  \\
    \end{tabular}
    \caption{Coverage probabilities of the CIs under the targeted stopping setting.}
    \label{tab:targeted}
	\end{minipage}\hfill
\end{table}

Chernoff's CI is still conservative as expected, since it further relaxes the conservative confidence region \eqref{eqn:pvalueN}. We focus on comparing the other three CIs. From Figure \ref{fig:targeted}, we see that for the upper bound, the CLT bound is the closest to the truth and Wilson's bound is the farthest. For the lower bound, Wilson's bound is the closest and the CLT bound is the farthest, The exact bound falls in between. On the other hand, they all have a similar magnitude, and when $n_0$ is large, they are all close to each other. In terms of coverage probability (Table \ref{tab:targeted}), the three CIs also have similar performances when $n_0$ is large. In particular, in relation to the motivation in Section \ref{sec:problem}, if we can sample until we observe enough (say 30) successes, then the CLT CI is indeed reliable to use although it is not guaranteed to be valid.

\section{Conclusion}\label{sec:conclusion} 
In this paper, we study the uncertainty quantification, more precisely the construction, validity and tightness of CIs, for rare-event probability using naive sample proportion estimators from Bernoulli data. We focus on two settings, the standard setting where the sample size is fixed, and the targeted stopping setting where the number of successes is fixed. Under each setting, we first review the existing CLT, Wilson's and Exact CIs. It is known that the CLT and Wilson's CIs are not necessarily valid in the sense that the coverage probability can be lower than the nominal confidence level, and the exact CI is valid yet its tightness is hard to analyze. These motivate us to derive other valid CIs with more explicit expressions. More specifically, we relax the exact confidence region via inverting Chernoff's inequality and the B-E theorem to obtain Chernoff's and the B-E CIs respectively. Table \ref{tab:summary} in Section \ref{sec:problem_conclusion} is a comprehensive summary of our findings, and we briefly summarize our key findings in Section \ref{sec:problem_conclusion}.


Overall, we recommend the Exact CI when one wants to ensure a guarantee of the nominal confidence level, otherwise we suggest using Wilson's CI given its excellent empirical performance. Moreover, either in the standard setting or the targeted stopping setting, we have justified that the CLT CI is not far from the valid Chernoff's and B-E CIs, so it also exhibits reasonable tightness in terms of half-width. However, its coverage probability can deviate significantly from the nominal level when $np$ is small. The latter two intervals are conservative and hence not recommended to use in practice, but they provide us useful insights in understanding that the CLT and Wilson's CIs are relatively close to these two valid CIs.

\appendix

\section{Proofs}
\begin{proof}[Proof of Theorem \ref{thm:Wilson_error}]
We get directly from the formula of $\hat{p}_u^{Wilson}$ that 
$$
\begin{aligned}
|\hat{p}_u^{Wilson}-\hat{p}_u^{CLT}|&=\left|\frac{1-2\hat{p}+\sqrt{1+\frac{4n\hat{p}(1-\hat{p})}{z_{1-\alpha/2}^2}}-2\left(1+\frac{n}{z_{1-\alpha/2}^2}\right)z_{1-\alpha/2}\sqrt{\frac{\hat{p}(1-\hat{p})}{n}}}{2\left(1+\frac{n}{z_{1-\alpha/2}^2}\right)}\right|\\
&\leq \frac{|1-2\hat{p}|+\left|2z_{1-\alpha/2}\sqrt{\frac{\hat{p}(1-\hat{p})}{n}}\right|+\left|\sqrt{1+\frac{4n\hat{p}(1-\hat{p})}{z_{1-\alpha/2}^2}}-\frac{2n}{z_{1-\alpha/2}}\sqrt{\frac{\hat{p}(1-\hat{p})}{n}}\right|}{2n/z_{1-\alpha/2}^2}.
\end{aligned}
$$
We have that $|1-2\hat{p}|\leq 1$, $\left|2z_{1-\alpha/2}\sqrt{\hat{p}(1-\hat{p})/n}\right|\leq z_{1-\alpha/2}/\sqrt{n}$ and 
$$
\left|\sqrt{1+\frac{4n\hat{p}(1-\hat{p})}{z_{1-\alpha/2}^2}}-\frac{2n}{z_{1-\alpha/2}}\sqrt{\frac{\hat{p}(1-\hat{p})}{n}}\right|=\left|\frac{1}{\sqrt{1+\frac{4n\hat{p}(1-\hat{p})}{z_{1-\alpha/2}^2}}+\frac{2n}{z_{1-\alpha/2}}\sqrt{\frac{\hat{p}(1-\hat{p})}{n}}}\right|\leq 1,
$$
which concludes the proof for the upper bounds. The proof for the lower bound is almost the same.
\end{proof}
\begin{proof}[Proof of Theorem \ref{thm:chernoff_error}]
We have that 
$$
\begin{aligned}
\hat{p}_u^{Chernoff}-\hat{p}_u^{CLT}&=\sqrt{\frac{2\log(2/\alpha)\hat{p}}{n}+\frac{(\log(2/\alpha))^2}{n^2}}+\frac{\log(2/\alpha)}{n}-z_{1-\alpha/2}\sqrt{\frac{\hat{p}(1-\hat{p})}{n}}\\
&\geq (\sqrt{2\log(2/\alpha)}-z_{1-\alpha/2})\sqrt{\frac{\hat{p}}{n}}+\frac{\log(2/\alpha)}{n}.
\end{aligned}
$$
Similarly we could prove the inequality for the lower bounds.
\end{proof}
\begin{proof}[Proof of Theorem \ref{thm:BE_bound}]
Following our derivations, it suffices to show that for the given $N_0$ and any $n>N_0$, we have that $0\leq \lambda\leq 1-\frac{4C}{\sqrt{n}\alpha}$ and $U_1\leq U_2$. Obviously, $\frac{2\tilde{C}}{\sqrt{n}\alpha}\leq u<1$, so $\lambda>0$. On the other side, $\lambda\leq 1-\frac{4C}{\sqrt{n}\alpha}$ holds since $n\geq \left(\frac{4C}{u\alpha}\right)^2$. Now we prove that $U_1\leq U_2$ for $n>N_0$. Indeed, if $\tilde{C}=C/\sqrt{\hat{p}(1-\hat{p})}$, then $U_1=\hat{p}\land(1-\hat{p})\le \hat{p}$ and we know that $U_2\ge\hat{p}$, so $U_1\le U_2$. In the other case that $\tilde{C}=u\sqrt{n}\alpha/2$, we have that
$$
U_1=\frac{1-\sqrt{1-\frac{16C^2}{nu^2\alpha^2}}}{2}=\frac{\frac{16C^2}{nu^2\alpha^2}}{2\left(1+\sqrt{1-\frac{16C^2}{nu^2\alpha^2}}\right)}\leq\frac{\frac{16C^2}{nu^2\alpha^2}}{2\left(1+1-\frac{16C^2}{nu^2\alpha^2}\right)}=\frac{4C^2}{nu^2\alpha^2-8C^2}
$$
and 
$$
U_2\ge \frac{1}{1+\frac{n}{z_{(1-u)\alpha/2}^2}}=\frac{z_{(1-u)\alpha/2}^2}{z_{(1-u)\alpha/2}^2+n}.
$$
Since $u$ is chosen such that $\frac{4C^2}{u^2\alpha^2}<z_{(1-u)\alpha/2}^2$ and 
$$
n\geq \frac{12z_{(1-u)\alpha/2}^2C^2}{z_{(1-u)\alpha/2}^2u^2\alpha^2-4C^2},
$$
we get that 
$$
\frac{4C^2}{nu^2\alpha^2-8C^2}\leq \frac{z_{(1-u)\alpha/2}^2}{z_{(1-u)\alpha/2}^2+n}
$$
and hence $U_1\leq U_2$. Note that as $u\uparrow 1$, $\frac{4C^2}{u^2\alpha^2}\rightarrow \frac{4C^2}{\alpha^2}$ while $z_{(1-u)\alpha/2}^2\rightarrow\infty$, and thus such $u$ exists.
\end{proof}
\begin{proof}[Proof of Theorem \ref{thm:BE_error}]
We have that 
$$
\hat{p}_u^{BE}-\hat{p}_u^{CLT}=\frac{1-2\hat{p}+\sqrt{1+\frac{4n\hat{p}(1-\hat{p})}{z_{\lambda\alpha/2}^2}}-2\left(1+\frac{n}{z_{\lambda\alpha/2}^2}\right)z_{1-\alpha/2}\sqrt{\frac{\hat{p}(1-\hat{p})}{n}}}{2\left(1+\frac{n}{z_{\lambda\alpha/2}^2}\right)}.
$$
We first deal with 
$$
\begin{aligned}
&\sqrt{1+\frac{4n\hat{p}(1-\hat{p})}{z_{\lambda\alpha/2}^2}}-2\left(1+\frac{n}{z_{\lambda\alpha/2}^2}\right)z_{1-\alpha/2}\sqrt{\frac{\hat{p}(1-\hat{p})}{n}}\\
=&\frac{\left(1-\frac{8z_{1-\alpha/2}^2\hat{p}(1-\hat{p})}{z_{\lambda\alpha/2}^2}\right)-\frac{4z_{1-\alpha/2}^2\hat{p}(1-\hat{p})}{n}+\frac{4n\hat{p}(1-\hat{p})}{z_{\lambda\alpha/2}^2}\left(1-\frac{z_{1-\alpha/2}^2}{z_{\lambda\alpha/2}^2}\right)}{\sqrt{1+\frac{4n\hat{p}(1-\hat{p})}{z_{\lambda\alpha/2}^2}}+2\left(1+\frac{n}{z_{\lambda\alpha/2}^2}\right)z_{1-\alpha/2}\sqrt{\frac{\hat{p}(1-\hat{p})}{n}}}.
\end{aligned}
$$
The denominator satisfies that 
$$
\sqrt{1+\frac{4n\hat{p}(1-\hat{p})}{z_{\lambda\alpha/2}^2}}+2\left(1+\frac{n}{z_{\lambda\alpha/2}^2}\right)z_{1-\alpha/2}\sqrt{\frac{\hat{p}(1-\hat{p})}{n}}\ge\left( \frac{2\sqrt{n\hat{p}(1-\hat{p})}(z_{1-\lambda\alpha/2}+z_{1-\alpha/2})}{z_{\lambda\alpha/2}^2}\right)\lor 1.	
$$
Note that $(z_{1-\lambda\alpha/2}+z_{1-\alpha/2})/z_{\lambda\alpha/2}^2$ increases with the value of $\lambda$. Since $\lambda\ge 1-u>0$, we can find a constant $C_1$ such that 
$$
\sqrt{1+\frac{4n\hat{p}(1-\hat{p})}{z_{\lambda\alpha/2}^2}}+2\left(1+\frac{n}{z_{\lambda\alpha/2}^2}\right)z_{1-\alpha/2}\sqrt{\frac{\hat{p}(1-\hat{p})}{n}}\ge \left(C_1\sqrt{n\hat{p}(1-\hat{p})}\right)\lor 1.
$$
Then we deal with the numerator. We know that $z_{\alpha/2}=\Phi^{-1}(\alpha/2)$ and $z_{\lambda\alpha/2}=\Phi^{-1}(\lambda\alpha/2)$. By Taylor expansion, we have that 
$$
\frac{1}{z_{\lambda\alpha/2}^2}=\frac{1}{z_{\alpha/2}^2}-\frac{2\sqrt{2\pi}}{z_{\alpha/2}^3}e^{\frac{z_{\alpha/2}^2}{2}}(\lambda-1)\alpha/2+r(\lambda).
$$
Here, $r(\lambda)$ is continuous in $\lambda$ and $r(\lambda)/(1-\lambda)\to 0$ as $\lambda\uparrow 1$. We also know that $1-\lambda\le u$, and thus $|r(\lambda)/(1-\lambda)|=|(\sqrt{n}\alpha r(\lambda))/(2\tilde{C})|$ is bounded by a constant. Hence $|\sqrt{n\hat{p}(1-\hat{p})}r(\lambda)|$ is bounded by a constant. We have that
$$
1-\frac{z_{1-\alpha/2}^2}{z_{\lambda\alpha/2}^2}=\frac{2\sqrt{2\pi}}{z_{1-\alpha/2}}e^{\frac{z_{1-\alpha/2}^2}{2}}\frac{\tilde{C}}{\sqrt{n}}-z_{1-\alpha/2}^2r(\lambda).
$$
Thus, the numerator satisfies that 
$$
\begin{aligned}
& \left|\left(1-\frac{8z_{1-\alpha/2}^2\hat{p}(1-\hat{p})}{z_{\lambda\alpha/2}^2}\right)-\frac{4z_{1-\alpha/2}^2\hat{p}(1-\hat{p})}{n}+\frac{4n\hat{p}(1-\hat{p})}{z_{\lambda\alpha/2}^2}\left(1-\frac{z_{1-\alpha/2}^2}{z_{\lambda\alpha/2}^2}\right)\right|\\
\le & 1+8\hat{p}(1-\hat{p})+\frac{4z_{1-\alpha/2}^2\hat{p}(1-\hat{p})}{n}+\frac{4n\hat{p}(1-\hat{p})}{z_{\lambda\alpha/2}^2}\left(\frac{2\sqrt{2\pi}}{z_{1-\alpha/2}}e^{\frac{z_{1-\alpha/2}^2}{2}}\frac{\tilde{C}}{\sqrt{n}}-z_{1-\alpha/2}^2r(\lambda)\right).
\end{aligned}
$$
Clearly, the first three terms divided by the denominator are bounded by some constants. Now we consider the fourth term. Since $|\sqrt{n\hat{p}(1-\hat{p})}r(\lambda)|$ is bounded, we can also get that the fourth term divided by the denominator is bounded by some universal constant.
\par 
Therefore, combining the above results, we know that 
$$
\left|1-2\hat{p}+\sqrt{1+\frac{4n\hat{p}(1-\hat{p})}{z_{\lambda\alpha/2}^2}}-2\left(1+\frac{n}{z_{\lambda\alpha/2}^2}\right)z_{1-\alpha/2}\sqrt{\frac{\hat{p}(1-\hat{p})}{n}}\right|\le C_2
$$
where $C_2$ is a positive constant. We also have that 
$$
2\left(1+\frac{n}{z_{\lambda\alpha/2}^2}\right)\ge \frac{2n}{z_{(1-u)\alpha/2}^2}.
$$
Hence the error term satisfies that
$$
\left|\frac{1-2\hat{p}+\sqrt{1+\frac{4n\hat{p}(1-\hat{p})}{z_{\lambda\alpha}^2}}-2\left(1+\frac{n}{z_{\lambda\alpha}^2}\right)z_{1-\alpha}\sqrt{\frac{\hat{p}(1-\hat{p})}{n}}}{2\left(1+\frac{n}{z_{\lambda\alpha}^2}\right)}\right|\le \frac{C_0}{n}
$$
for some constant $C_0$. From the above derivations, we find that $C_0$ only depends on $\alpha$ and the choice of $u$. 
\end{proof}
\begin{proof}[Proof of Theorem \ref{thm:BE_boundN}]
If $n_0>32C'^2/\alpha^2$, then we get that $0<\lambda<1$. The validness of the CI is justified by the derivations above the theorem.
\end{proof}
\begin{proof}[Proof of Theorem \ref{thm:BE_errorN}]
We have that 
\begin{align*}
    \hat{p}_{u,n_0}^{BE}-\hat{p}_{u,n_0}^{CLT}&=-\frac{z_{\lambda\alpha/2}^2n_0}{2N^2}+\sqrt{\frac{z_{\lambda\alpha/2}^2n_0(N-n_0)}{N^3}+\frac{z_{\lambda\alpha/2}^4n_0^2}{4N^4}}-z_{1-\alpha/2}\sqrt{\frac{n_0(N-n_0)}{N^3}}\\
    &=\frac{\frac{(z_{\lambda\alpha/2}^2-z_{1-\alpha/2}^2)n_0(N-n_0)}{N^3}-\frac{n_0z_{\lambda\alpha/2}^2z_{1-\alpha/2}}{N^2}\sqrt{\frac{n_0(N-n_0)}{N^3}}}{\sqrt{\frac{z_{\lambda\alpha/2}^2n_0(N-n_0)}{N^3}+\frac{z_{\lambda\alpha/2}^4n_0^2}{4N^4}}+\frac{z_{\lambda\alpha/2}^2n_0}{2N^2}+z_{1-\alpha/2}\sqrt{\frac{n_0(N-n_0)}{N^3}}}
\end{align*}
First, we have that the denominator
\begin{equation*}
    \sqrt{\frac{z_{\lambda\alpha/2}^2n_0(N-n_0)}{N^3}+\frac{z_{\lambda\alpha/2}^4n_0^2}{4N^4}}+\frac{z_{\lambda\alpha/2}^2n_0}{2N^2}+z_{1-\alpha/2}\sqrt{\frac{n_0(N-n_0)}{N^3}}\geq \sqrt{\frac{z_{\lambda\alpha/2}^2n_0(N-n_0)}{N^3}}=z_{1-\lambda\alpha/2}\sqrt{\frac{n_0(N-n_0)}{N^3}}.
\end{equation*}
Next, we have that the numerator 
\begin{equation*}
    \frac{(z_{\lambda\alpha/2}^2-z_{1-\alpha/2}^2)n_0(N-n_0)}{N^3}-\frac{n_0z_{\lambda\alpha/2}^2z_{1-\alpha/2}}{N^2}\sqrt{\frac{n_0(N-n_0)}{N^3}}\leq \left(1-\frac{z_{1-\alpha/2}^2}{z_{\lambda\alpha/2}^2}\right)\frac{z_{\lambda\alpha/2}^2n_0(N-n_0)}{N^3}.
\end{equation*}
As mentioned in the proof of Theorem \ref{thm:BE_error}, we have that 
$$
\frac{1}{z_{\lambda\alpha/2}^2}=\frac{1}{z_{\alpha/2}^2}-\frac{2\sqrt{2\pi}}{z_{\alpha/2}^3}e^{\frac{z_{\alpha/2}^2}{2}}(\lambda-1)\alpha/2+r(\lambda)
$$
and 
$$
1-\frac{z_{1-\alpha/2}^2}{z_{\lambda\alpha/2}^2}=\frac{\sqrt{2\pi}}{z_{1-\alpha/2}}e^{\frac{z_{\alpha/2}^2}{2}}\frac{C_1'}{\sqrt{n_0}}\alpha-r(\lambda)z_{\alpha/2}^2
$$
where $r(\lambda)$ is continuous in $\lambda$ and $r(\lambda)/(1-\lambda)\to 0$ as $\lambda\uparrow 1$. We know that $1-\lambda=C_1'/\sqrt{n_0}$ for some constant $C_1'>0$ from the choice of $\lambda$. For $n_0>32C'^2/\alpha^2$, $\lambda$ is bounded away from 0, and hence $|r(\lambda)/(1-\lambda)|=|\sqrt{n_0}r(\lambda)/C_1'|$ has a constant upper bound. Hence, there exists a constant $C_2'>0$ such that 
\begin{equation*}
    \left(1-\frac{z_{1-\alpha/2}^2}{z_{\lambda\alpha/2}^2}\right)\frac{z_{\lambda\alpha/2}^2n_0(N-n_0)}{N^3}\leq \frac{C_2'}{\sqrt{n_0}}\frac{z_{\lambda\alpha/2}^2n_0(N-n_0)}{N^3}.
\end{equation*}
Combining the results, we get that 
\begin{equation*}
    \hat{p}_{u,n_0}^{BE}-\hat{p}_{u,n_0}^{CLT}\leq \frac{C_2'}{\sqrt{n_0}}z_{1-\lambda\alpha/2}\sqrt{\frac{n_0(N-n_0)}{N^3}}\leq\frac{C_2'z_{1-\lambda\alpha/2}}{N}\leq \frac{C_3'}{N}
\end{equation*}
where $C_3'>0$ is a constant and we get the last inequality since $\lambda$ has a non-zero lower bound. 

Now notice that 
\begin{equation*}
    \hat{p}_{l,n_0}^{CLT}-\hat{p}_{l,n_0}^{BE}=\hat{p}_{u,n_0}^{BE}-\hat{p}_{u,n_0}^{CLT}+\frac{z_{\lambda\alpha/2}^2n_0}{N^2}
\end{equation*}
and 
\begin{equation*}
    \frac{z_{\lambda\alpha/2}^2n_0}{N^2}\leq \frac{z_{\lambda\alpha/2}^2}{N}=O(1/N).
\end{equation*}
Therefore, we could find a constant $C_0'$ such that the theorem holds.
\end{proof}


\acks 
\noindent We gratefully acknowledge support from the National Science Foundation under grants CAREER CMMI-1834710 and IIS-1849280. A preliminary conference version of this work has appeared in \cite{bai2020error}.

\bibliographystyle{APT}
\bibliography{bibliography}

\begin{thebibliography}{10}

\bibitem{agresti1998approximate}
{\sc Agresti, A. and Coull, B.~A.} (1998).
\newblock Approximate is better than “exact” for interval estimation of
  binomial proportions.
\newblock {\em The American Statistician\/} {\bf 52,} 119--126.

\bibitem{arief2021deep}
{\sc Arief, M., Huang, Z., Kumar, G. K.~S., Bai, Y., He, S., Ding, W., Lam, H.
  and Zhao, D.} (2021).
\newblock Deep probabilistic accelerated evaluation: A robust certifiable
  rare-event simulation methodology for black-box safety-critical systems.
\newblock In {\em International Conference on Artificial Intelligence and
  Statistics}. ed. A.~Banerjee and K.~Fukumizu.
\newblock Proceedings of Machine Learning Research.
\newblock pp.~595--603.

\bibitem{asmussen2010ruin}
{\sc Asmussen, S. and Albrecher, H.} (2010).
\newblock {\em Ruin probabilities} vol.~14.
\newblock World scientific.

\bibitem{asmussen1985conjugate}
{\sc Asmussen, S. et~al.} (1985).
\newblock Conjugate processes and the simulation of ruin problems.
\newblock {\em Stochastic Processes and their Applications\/} {\bf 20,}
  213--229.

\bibitem{asmussen2007stochastic}
{\sc Asmussen, S. and Glynn, P.~W.} (2007).
\newblock {\em Stochastic Simulation: Algorithms and Analysis}.
\newblock Springer Science \& Business Media, New York, New York.

\bibitem{au2001estimation}
{\sc Au, S.-K. and Beck, J.~L.} (2001).
\newblock Estimation of small failure probabilities in high dimensions by
  subset simulation.
\newblock {\em Probabilistic Engineering Mechanics\/} {\bf 16,} 263--277.

\bibitem{bai2022rare}
{\sc Bai, Y., Huang, Z., Lam, H. and Zhao, D.} (2022).
\newblock Rare-event simulation for neural network and random forest
  predictors.
\newblock {\em ACM Transactions on Modeling and Computer Simulation\/} {\bf
  32,} 1--33.

\bibitem{bai2020error}
{\sc Bai, Y. and Lam, H.} (2020).
\newblock On the error of naive rare-event monte carlo estimator.
\newblock In {\em 2020 Winter Simulation Conference (WSC)}.
\newblock IEEE.
\newblock pp.~397--408.

\bibitem{blanchet2009rare}
{\sc Blanchet, J., Glynn, P. and Lam, H.} (2009).
\newblock Rare event simulation for a slotted time m/g/s model.
\newblock {\em Queueing Systems\/} {\bf 63,} 33--57.

\bibitem{blanchet2012state}
{\sc Blanchet, J. and Lam, H.} (2012).
\newblock State-dependent importance sampling for rare-event simulation: An
  overview and recent advances.
\newblock {\em Surveys in Operations Research and Management Science\/} {\bf
  17,} 38--59.

\bibitem{blanchet2014rare}
{\sc Blanchet, J. and Lam, H.} (2014).
\newblock Rare-event simulation for many-server queues.
\newblock {\em Mathematics of Operations Research\/} {\bf 39,} 1142--1178.

\bibitem{bucklew2013introduction}
{\sc Bucklew, J.} (2004).
\newblock {\em Introduction to Rare Event Simulation}.
\newblock Springer Science \& Business Media, New York, New York.

\bibitem{clopper1934use}
{\sc Clopper, C.~J. and Pearson, E.~S.} (1934).
\newblock The use of confidence or fiducial limits illustrated in the case of
  the binomial.
\newblock {\em Biometrika\/} {\bf 26,} 404--413.

\bibitem{collamore2002importance}
{\sc Collamore, J.~F.} (2002).
\newblock Importance sampling techniques for the multidimensional ruin problem
  for general markov additive sequences of random vectors.
\newblock {\em The Annals of Applied Probability\/} {\bf 12,} 382--421.

\bibitem{davison1990models}
{\sc Davison, A.~C. and Smith, R.~L.} (1990).
\newblock Models for exceedances over high thresholds.
\newblock {\em Journal of the Royal Statistical Society: Series B
  (Methodological)\/} {\bf 52,} 393--425.

\bibitem{dupuis2009importance}
{\sc Dupuis, P., Leder, K. and Wang, H.} (2009).
\newblock Importance sampling for weighted-serve-the-longest-queue.
\newblock {\em Mathematics of Operations Research\/} {\bf 34,} 642--660.

\bibitem{embrechts2013modelling}
{\sc Embrechts, P., Kl{\"u}ppelberg, C. and Mikosch, T.} (1997).
\newblock {\em Modelling Extremal Events: for Insurance and Finance}.
\newblock Springer Science \& Business Media, Germany.

\bibitem{glasserman2004monte}
{\sc Glasserman, P.} (2004).
\newblock {\em Monte Carlo methods in financial engineering} vol.~53.
\newblock Springer.

\bibitem{glasserman1999multilevel}
{\sc Glasserman, P., Heidelberger, P., Shahabuddin, P. and Zajic, T.} (1999).
\newblock Multilevel splitting for estimating rare event probabilities.
\newblock {\em Operations Research\/} {\bf 47,} 585--600.

\bibitem{glasserman2008fast}
{\sc Glasserman, P., Kang, W. and Shahabuddin, P.} (2008).
\newblock Fast simulation of multifactor portfolio credit risk.
\newblock {\em Operations Research\/} {\bf 56,} 1200--1217.

\bibitem{glasserman2005importance}
{\sc Glasserman, P. and Li, J.} (2005).
\newblock Importance sampling for portfolio credit risk.
\newblock {\em Management science\/} {\bf 51,} 1643--1656.

\bibitem{heidelberger1995fast}
{\sc Heidelberger, P.} (1995).
\newblock Fast simulation of rare events in queueing and reliability models.
\newblock {\em ACM Transactions on Modeling and Computer Simulation (TOMACS)\/}
  {\bf 5,} 43--85.

\bibitem{huang2017accelerated}
{\sc Huang, Z., Lam, H., LeBlanc, D.~J. and Zhao, D.} (2017).
\newblock Accelerated evaluation of automated vehicles using piecewise mixture
  models.
\newblock {\em IEEE Transactions on Intelligent Transportation Systems\/} {\bf
  19,} 2845--2855.

\bibitem{juneja2006rare}
{\sc Juneja, S. and Shahabuddin, P.} (2006).
\newblock Rare-event simulation techniques: an introduction and recent
  advances.
\newblock {\em Handbooks in operations research and management science\/} {\bf
  13,} 291--350.

\bibitem{kroese1999efficient}
{\sc Kroese, D.~P. and Nicola, V.~F.} (1999).
\newblock Efficient estimation of overflow probabilities in queues with
  breakdowns.
\newblock {\em Performance Evaluation\/} {\bf 36,} 471--484.

\bibitem{mcneil2015quantitative}
{\sc McNeil, A.~J., Frey, R. and Embrechts, P.} (2015).
\newblock {\em Quantitative Risk Management: Concepts, Techniques and
  Tools-Revised Edition}.
\newblock Princeton university press, Princeton, New Jersey.

\bibitem{nicola1993fast}
{\sc Nicola, V.~F., Nakayama, M.~K., Heidelberger, P. and Goyal, A.} (1993).
\newblock Fast simulation of highly dependable systems with general failure and
  repair processes.
\newblock {\em IEEE Transactions on Computers\/} {\bf 42,} 1440--1452.

\bibitem{nicola2001techniques}
{\sc Nicola, V.~F., Shahabuddin, P. and Nakayama, M.~K.} (2001).
\newblock Techniques for fast simulation of models of highly dependable
  systems.
\newblock {\em IEEE Transactions on Reliability\/} {\bf 50,} 246--264.

\bibitem{o2018scalable}
{\sc O'Kelly, M., Sinha, A., Namkoong, H., Tedrake, R. and Duchi, J.~C.}
  (2018).
\newblock Scalable end-to-end autonomous vehicle testing via rare-event
  simulation.
\newblock In {\em Proceedings of the 32nd International Conference on Neural
  Information Processing Systems}. ed. S.~Bengio, H.~Wallach, H.~Larochelle,
  K.~Grauman, and N.~Cesa-Bianchi.
\newblock Curran Associates, Inc., Red Hook, New York.
\newblock p.~9849–9860.

\bibitem{ridder2009importance}
{\sc Ridder, A.} (2009).
\newblock Importance sampling algorithms for first passage time probabilities
  in the infinite server queue.
\newblock {\em European Journal of Operational Research\/} {\bf 199,} 176--186.

\bibitem{rubinstein2016simulation}
{\sc Rubinstein, R.~Y. and Kroese, D.~P.} (2016).
\newblock {\em Simulation and the Monte Carlo Method}.
\newblock John Wiley \& Sons, Hoboken, New Jersey.

\bibitem{sadowsky1991large}
{\sc Sadowsky, J.~S.} (1991).
\newblock Large deviations theory and efficient simulation of excessive
  backlogs in a gi/gi/m queue.
\newblock {\em IEEE Transactions on Automatic Control\/} {\bf 36,} 1383--1394.

\bibitem{sadowsky1990large}
{\sc Sadowsky, J.~S. and Bucklew, J.~A.} (1990).
\newblock On large deviations theory and asymptotically efficient monte carlo
  estimation.
\newblock {\em IEEE Transactions on Information Theory\/} {\bf 36,} 579--588.

\bibitem{shao2013}
{\sc Shao, Q.-M. and Wang, Q.} (2013).
\newblock Self-normalized limit theorems: A survey.
\newblock {\em Probab. Surveys\/} {\bf 10,} 69--93.

\bibitem{siegmund1976importance}
{\sc Siegmund, D.} (1976).
\newblock Importance sampling in the monte carlo study of sequential tests.
\newblock {\em The Annals of Statistics\/} 673--684.

\bibitem{smith1984threshold}
{\sc Smith, R.~L.} (1984).
\newblock Threshold methods for sample extremes.
\newblock In {\em Statistical Extremes and Applications}. ed. J.~T.
  de~Oliveira.
\newblock Springer, Dordrecht pp.~621--638.

\bibitem{szechtman2002rare}
{\sc Szechtman, R. and Glynn, P.~W.} (2002).
\newblock Rare-event simulation for infinite server queues.
\newblock In {\em Proceedings of the Winter Simulation Conference}.
\newblock vol.~1 IEEE.
\newblock pp.~416--423.

\bibitem{tuffin2004numerical}
{\sc Tuffin, B.} (2004).
\newblock On numerical problems in simulations of highly reliable markovian
  systems.
\newblock In {\em First International Conference on the Quantitative Evaluation
  of Systems, 2004. QEST 2004. Proceedings.}.
\newblock IEEE.
\newblock pp.~156--164.

\bibitem{villen1994restart}
{\sc Vill{\'e}n-Altamirano, M. and Vill{\'e}n-Altamirano, J.} (1994).
\newblock Restart: a straightforward method for fast simulation of rare events.
\newblock In {\em Proceedings of the 1994 Winter Simulation Conference}. ed.
  J.~D. Tew, M.~S. Manivannan, D.~A. Sadowski, and A.~F. Seila.
\newblock Institute of Electrical and Electronics Engineers, Inc.
\newblock Piscataway, New Jersey.
\newblock pp.~282--289.

\bibitem{wang2009relative}
{\sc Wang, Q. and Hall, P.} (2009).
\newblock Relative errors in central limit theorems for student's t statistic,
  with applications.
\newblock {\em Statistica Sinica\/} 343--354.

\bibitem{wang1999exponential}
{\sc Wang, Q. and Jing, B.-Y.} (1999).
\newblock An exponential nonuniform berry-esseen bound for self-normalized
  sums.
\newblock {\em The Annals of Probability\/} {\bf 27,} 2068--2088.

\bibitem{webb2018statistical}
{\sc Webb, S., Rainforth, T., Teh, Y.~W. and Kumar, M.~P.} (2018).
\newblock A statistical approach to assessing neural network robustness.
\newblock {\em arXiv preprint arXiv:1811.07209\/}.

\bibitem{weng2018evaluating}
{\sc Weng, T.-W., Zhang, H., Chen, P.-Y., Yi, J., Su, D., Gao, Y., Hsieh, C.-J.
  and Daniel, L.} (2018).
\newblock Evaluating the robustness of neural networks: An extreme value theory
  approach.
\newblock {\em arXiv preprint arXiv:1801.10578\/}.

\bibitem{zhao2017accelerated}
{\sc Zhao, D., Huang, X., Peng, H., Lam, H. and LeBlanc, D.~J.} (2017).
\newblock Accelerated evaluation of automated vehicles in car-following
  maneuvers.
\newblock {\em IEEE Transactions on Intelligent Transportation Systems\/} {\bf
  19,} 733--744.

\bibitem{zhao2016accelerated}
{\sc Zhao, D., Lam, H., Peng, H., Bao, S., LeBlanc, D.~J., Nobukawa, K. and
  Pan, C.~S.} (2016).
\newblock Accelerated evaluation of automated vehicles safety in lane-change
  scenarios based on importance sampling techniques.
\newblock {\em IEEE transactions on intelligent transportation systems\/} {\bf
  18,} 595--607.

\end{thebibliography}

\end{document}